\algnewcommand\algorithmicforeach{\textbf{for each}}
\algnewcommand\algorithmicswitch{\textbf{switch}}
\algnewcommand\algorithmiccase{\textbf{case}}
\algnewcommand\algorithmicassert{\texttt{assert}}
\algnewcommand\Assert[1]{\State \algorithmicassert(#1)}%
\def\BState{\State\hskip-\ALG@thistlm}
\title{Game Description Logic with Integers: \\  
A GDL Numerical Extension
}
\titlerunning{Game Description Logic with Integers}
\author{
Munyque Mittelmann 
\and
Laurent Perrussel 
}
\authorrunning{Mittelmann and Perrussel}
\institute{
  Universit\'e de Toulouse - IRIT,
  Toulouse,  France \\
  \email{\{munyque.mittelmann, laurent.perrussel\}@irit.fr}
}
\begin{document}

\maketitle

\begin{abstract}
Many problems can be viewed as games, where one or more agents try to ensure that certain objectives hold no matter the behavior from the environment and other agents.
In recent years, a number of logical formalisms have been proposed for specifying games among which the Game Description Language (GDL) was established as the official language for General Game Playing. 
Although numbers are recurring in games, the description of games with numerical features in GDL requires the enumeration from all possible numeric values and the relation among them. Thereby, in this paper, we introduce the Game Description Logic with Integers (GDLZ) to describe games with numerical variables, numerical parameters, as well as to 
perform numerical comparisons. 
We compare our approach with GDL and show that when describing the same game, GDLZ is more compact.
\keywords{Game Description Language \and
Knowledge Representation \and
General Game Playing.}
\end{abstract}
\section{Introduction}
\label{sec:Introduction}

Many problems, as multiagent planning or process synchronization, can be \linebreak viewed
as games, where one or more agents try to ensure that certain objectives hold no matter the behavior from the environment and other agents \cite{Giacomo2010}. 
Thereby, a number of logical formalisms have been proposed for specifying game structures and its properties, 
 such as the Game Logic \cite{Parikh1985,Pauly2003}, the Dynamic Game Logic for sequential \cite{VanBenthem2001} and simultaneous games \cite{VanBenthem2008}, the GameGolog language \cite{Giacomo2010} and so on. 
Among this formalisms, the Game Description Language (GDL) \cite{Genesereth2005,gdl_specification} 
has been established  
as the official language for the General Game Playing (GGP) Competition. 
Due to the GDL limitations, 
such as its restriction to deterministic games with complete state information, several works investigate GDL extensions to improve its expressiveness. 
Zhang and Thielscher (2014) \cite{Zhang2015Representing} provide a GDL extension using a modality for linear time and state transition structures. They also propose two dual connectives to express preferences in strategies.

Another extension is called GDL with Incomplete Information (GDL-II) and it was proposed to describe nondeterministic games with randomness and incomplete state knowledge \cite{Thielscher2010,Schiffel2014}.
A different approach to deal
with this problem is the Epistemic GDL, that allows to represent imperfect information games and provides a semantical model that can be used for reasoning about game information and players's epistemic status \cite{Jiang2016a}.
GDL with Imperfect Information and Introspection (GDL-III) is an extension of GDL-II to include epistemic games, which are characterized by rules that depend on the knowledge of players  \cite{Thielscher2016, Thielscher2017}. 
In order to model how
agents can cooperate to achieve a desirable goal, 
Jiang et al. (2014) present a framework to combine GDL with the coalition operators from Alternating-time Temporal Logic and prioritized strategy connectives \cite{Jiang2014GDLmeetsATL}.

Although numbers are recurring in game descriptions (e.g. Monopoly, Nim game), neither GDL or its extensions incorporate numerical features. In these approaches, numbers can be designed as index in propositions or actions but not directly used as state variables. Thereby, describing games with numerical features can lead to an exhaustive enumeration of all possible numeric values and the relation between them.
In the context of planning problems, numerical features  have been introduced in Planning Domain Description Language (PDDL) by its first versions \cite{Ghallab98, McDermott2000} and 
improved by PDDL 2.1 \cite{Fox2003, Gerevini2008}.
In PDDL 2.1, a world state contains
an assignment of values to a set of numerical variables. 
These variables can be 
modified by action effects
and used in expressions to describe actions' preconditions and planning goals.

Similarly to the approach of PDDL 2.1, 
in this paper, we introduce the GDL extension Game Description Logic with Integers (GDLZ) that incorporates numerical variables, parameters and comparisons.
Regarding that board games are mainly described with discrete values, our approach only considers the integer set.
We compare our approach with GDL and show that a game description in GDLZ is more compact than the corresponding 
description in GDL.

This paper is organized as follows. In Section \ref{sec:GameDescriptionIntegers}, we introduce the framework 
by means of state transition structures 
and we present
the language syntax and semantics. 
In Section \ref{sec:GDLxZ}, we define the translation between GDLZ and GDL and we compare both languages. 
Section \ref{sec:conclusions} concludes the paper, bringing final considerations.  

\section{Game Description Logic with Integers}
\label{sec:GameDescriptionIntegers}

In this section, we introduce a logical framework for game specification with integer numbers. The framework is an extension from the GDL state transition model and language \cite{Zhang2015Representing}, such that it defines numerical variables and parameters. We call the framework Game Description Logic with Integers, denoted GDLZ.

To describe a game, we first define a game signature, that
specifies who are the players (the agents), what are the possible actions for each player and what are the aspects that describe
each state in the game (the propositions and numerical variables).
We define a game signature as follows:

\begin{definition}
\label{def:signature}
A \textit{game signature} $\mathcal{S}$ is a tuple $(N, \mathcal{A}, \Phi , X)$, where:

\begin{itemize}
\item  $N = \{r_1, r_2, \cdot\cdot\cdot, r_\mathsf{k}\}$ is a nonempty finite set of \textit{agents};
\item $\mathcal{A} = \bigcup_{r \in N} A^r $ where $A^r = \{ a^r_1(\bar{z}_1), \cdot\cdot\cdot, a^r_m(\bar{z}_m)\} $ consists of a nonempty set of \textit{actions} performed by agent $r \in N$, 
where $\bar{z}_i \in \mathbb{Z}^{l}$ 
is a possibly empty tuple of $l$ integer values representing the parameters for the action $a^r_i$, 
$i \leq m$ and $l \in \mathbb{N}$. 
		For convenience, we occasionally write  $a^r_i$ for denoting an action $a^r_i(\bar{z}_i) \in \mathcal{A}$; 
\item  $\Phi = \{p, q, \cdot\cdot\cdot \}$ is a finite set of atomic propositions for specifying individual features of a game state; 
\item $X = \langle x_1, x_2, \cdot\cdot\cdot, x_n\rangle$ is a tuple 
of numerical variables for specifying numerical features of a game state.
\end{itemize}

 
\end{definition}


Given a game signature, we define 
a state transition model, that allows us to represent the key aspects of
a game, such as the winning states for each agent, the legal actions in each state and the transitions between game states.



\begin{definition}
\label{def:model}
Given a game signature $\mathcal{S} = (N, \mathcal{A}, \Phi , X)$, a state transition ST \textit{model} $M$ is a tuple $(W, \bar{w}, T, L, U, g, \pi_\Phi , \pi_\mathbb{Z})$, where:

\begin{itemize}
\item  $W$ is a nonempty set of \textit{states};
\item $\bar{w} \in W$ is the \textit{initial} state;
\item $T \subseteq W$ is a set of \textit{terminal} states;
\item $ L \subseteq W \times \mathcal{A}$ is a \textit{legality} relation, describing the legal actions at each state; 
\item $U : W \times D \rightarrow W$ is an \textit{update} function, where $D = \prod_{r \in N} A^r$ denote the set of joint actions, specifying the transitions for each joint state;
\item $g : N \rightarrow 2^W$ is a \textit{goal} function, specifying the winning states for each agent;
\item $\pi_\Phi : W \to 2^\Phi $ is the valuation function for the state propositions;
\item $\pi_\mathbb{Z}: W \to \mathbb{Z}^n$ is the valuation function for the state numerical variables,  such that $\pi_\mathbb{Z}(w)$ is a tuple of integer values assigned to the variables $X$ at state $w \in W$. %
Let $\pi_\mathbb{Z}^i(w)$ denote the $i$-th value of $\pi_\mathbb{Z}(w)$.
\end{itemize}
\end{definition}

Given $d\in D$, let $d(r)$ be the individual action for agent $r$ in the joint action $d$. Let $L(w) = \{a \in \mathcal{A} \mid (w,a) \in L \}$ be the set of all legal actions at state $w$.

\begin{definition}
\label{def:path}
Given an ST-model $M = (W, \bar{w}, T, L, U, g, \pi_\Phi , \pi_\mathbb{Z})$, a \textit{path} is a finite sequence of states $\bar{w}\stackrel{d_1}{\to} w_1 \stackrel{d_2}{\to}\cdot\cdot\cdot \stackrel{d_e}{\to} w_e$ such that $e\geq 0$ and for any $j \in \{1, \cdot\cdot\cdot, e\}$:  (i) $\{w_0, \cdot\cdot\cdot, w_{e-1}\} \cap T = \emptyset$, where $w_0 = \bar{w}$; (ii) $d_j(r) \in L(w_{j-1})$ for any $r \in N$; and (iii) $w_j = U(w_{j-1}, d_j)$. 

\end{definition}

A path $\delta$ is \textit{complete} if $w_e \in T$.
Given $\delta \in \mathcal{P}$, let $\delta[j]$ denote the $j$-th reachable state of $\delta$, $\theta(\delta, j)$ denotes the joint action taken at stage $j$ of $\delta$; and $\theta_r(\delta, j)$ denotes the action of agent $r$ taken at stage $j$ of $\delta$. 
Finally, the length of a path $\lambda$, written $|\lambda|$, is defined as the number of joint actions.



Describing a game with the ST-model is not practical, especially when modeling large games. Hereby, given a game signature $\mathcal{S} = (N, \mathcal{A}, \Phi, X)$, we introduce a variant of the language for GDL ($\mathcal{L}_{GDL}$ for short) 
to describe a GDLZ game in a more compact way by encoding its rules.

\subsection{Syntax}
\label{sec:GDLZ-syntax}
The \textit{language} is denoted by $\mathcal{L}_{GDLZ}$ and a \textit{formula} $\varphi$ in $\mathcal{L}_{GDLZ}$ is defined by the following Backus-Naur Form (BNF) grammar:

\begin{align*}
\varphi ::= p \mid initial \mid terminal \mid legal(a^r ( \bar{z}))\mid wins(r) \mid does(a ^r(\bar{z}))  \mid \neg \varphi \mid \varphi \land \varphi \mid 
\\ \bigcirc \varphi \mid
 z>z\mid z<z\mid z=z \mid \langle\bar{z}\rangle
\end{align*}

\noindent where, $p \in \Phi , r \in N, a^r\in A^r$,  $\bar{z}$ is a number list and $z$ is a numerical term. 

Let $\varepsilon$ denote the empty word. A number list $\bar{z}$ is defined as:
\[
\bar{z} :: = z \mid  z, \bar{z} \mid \varepsilon.
\]

Finally, a numerical term $z$ is defined by $\mathcal{L}_z$, which is generated by the following BNF: 
\[
z ::= z' \mid x'\mid add(z, z)\mid sub(z,z) \mid min(z,z)\mid max(z,z) 
\]
where $ z' \in \mathbb{Z}$ and $ x' \in X$. 

Other connectives $\lor, \to, \leftrightarrow, \top$ and $ \bot $ are defined by $\neg$ and $\land$ in the standard way.
The comparison operators $\leq$, $\geq$ and $\neq$ are defined by $\lor,  >, < $ and $ =$, respectively, as follows: (i) $z_1 < z_2\lor  z_1 = z_2$,  (ii) $z_1 > z_2\lor  z_1 = z_2$ and (iii) $z_1 > z_2 \lor z_1<z_2$. 

Intuitively, $initial$ and $terminal$ specify the initial state and the terminal state, respectively; $does(a^r(\bar{z}))$ asserts that agent $r$ takes action $a$ with the parameters $\bar{z}$ at the current state; $legal(a^r(\bar{z}))$ asserts that agent $r$ is allowed to take action $a$ with the parameters $\bar{z}$ at the current state; and $wins(r)$ asserts that agent $r$ wins at the current state. The formula $\bigcirc \varphi$ means ``$\varphi$ holds at the next state''. The formulas $z_1 > z_2$, $z_1<z_2$, $z_1=z_2$ means that a numerical term $z_1$ is greater, less and equal to a numerical term $z_2$, respectively.
Finally, $\langle \bar{z}\rangle$ asserts the current values for the numerical variables, i.e. the $i$-th variable in $X$ has the $i$-th value in $\bar{z}$, for $0\leq i \leq |X|$. Notice that $\langle \bar{z}\rangle$ could be represented by a conjunction over each $x_i \in X$ of formulas $x_i = z_i$, where $z_i \in \mathcal{L}_{z}$ is the current value of the variable $x_i$. However, $\langle \bar{z}\rangle$ provides a short cut and it is more meaningful, in the sense that it is strictly related to the valuation of the numerical variables in a given state.

For numerical terms, $add(z_1, z_2)$ and $sub(z_1,z_2)$ specify the value obtained by adding and subtracting $z_2$ from $z_1$, respectively. The formulas $min(z_1,z_2)$ and $max(z_1,z_2)$ specify the minimum and maximum value between $z_1$ and $z_2$, respectively. The extension of the comparison operators $>, <, =$, $\leq$, $\geq$ and $\neq$ to multiple arguments is straightforward.


If $\varphi$ is not in the form $\neg \varphi'$, $\bigcirc \varphi'$ or $ \varphi' \land \varphi''$, for any $\varphi', \varphi'' \in \mathcal{L}_{GLDZ}$, then $\varphi$ is called an atomic formula.
We say that a numerical variable occurs in an atomic formula $\varphi$ if (i) $\varphi$ is either in the form $legal(a^r(\bar{z}))$, $does(a^r(\bar{z}))$ or $\langle \bar{z}\rangle$ and  there is a  $x \in X$ in the numerical list $\bar{z}$; (ii) $\varphi$ is either in the form $z_1 < z_2$, $z_1 > z_2$ or $z_1 = z_2$ and $z_1 \in X$ or $z_2 \in X$.


\subsection{Semantics}
\label{sec:GDLZ-Semantics}

The semantics for the GDLZ language is given in two steps. First, we define function $v$ to assign the meaning of numerical terms $z \in \mathcal{L}_z$ in a specified  state (Definition \ref{def:funcionv}). Next, a formula $\varphi \in \mathcal{L}_{GDLZ}$ is interpreted with respect to a stage in a path (Definition \ref{def:semantics}).

\begin{definition}
\label{def:funcionv}
Given an ST-model $M$, a state $w$ 
and the functions $minimum$ and $maximum$\footnote{Through the rest of this paper, the functions $minimum(a, b)$ and $maximum(a,b)$ respectively return the minimum and maximum value between $a,b \in \mathbb{Z}$.} let us define function $v: W \times \mathcal{L}_z \rightarrow \mathbb{Z}$, associating any $z_i \in \mathcal{L}_z$ in a state $w \in W$ to a number in $\mathbb{Z}$: 

\[v(z_i, w) = \begin{dcases}  z_i & \text{if }  z_i \in \mathbb{Z}
\\ \pi_{\mathbb{Z}}^i(w) & \text{if } z_i = x_i \text{ \& } x_i \in X  
\\v(z_i', w) + v(z_i'', w) &\text{if } z_i = add(z_i', z_i'')
\\v(z_i', w) - v(z_i'', w) &\text{if } z_i = sub(z_i', z_i'')
\\minimum(v(z_i', w),v(z_i'', w)) &\text{if } z_i = min(z_i', z_i'')
\\maximum(v(z_i', w),v(z_i'', w)) &\text{if } z_i = min(z_i', z_i'') 
\end{dcases}\]

\end{definition}

\begin{definition}
\label{def:semantics}
Let $M$ be an ST-Model. Given a complete path $\delta$ of $M$, a stage $j$ on $\delta$,  
a formula $\varphi \in \mathcal{L}_{GDLZ}$ and function $v$, we say $\varphi$ is true (or satisfied) at $j$ of $\delta$ under $M$, denoted by $M, \delta, j \models \varphi$, according with the following definition:

\[ \arraycolsep=1.4pt\def \arraystretch{1.2}
\begin{array}{lllll}
M, \delta, j \models p & \qquad   & \text{iff} & \qquad \qquad  &  p \in \pi_\Phi (\delta [j]) \\

M, \delta, j  \models \neg \varphi  &  &\text{iff}  &&  M, \delta, j \not\models \varphi \\

M, \delta, j \models \varphi_1 \land \varphi_2  &  &\text{iff}  &&  M, \delta, j \models \varphi_1 \text{ and } M, \delta, j \models \varphi_2 \\

M, \delta, j \models initial  &  &\text{iff}  &&  \delta[j] = \bar{w} \\

M, \delta, j  \models terminal  &  &\text{iff}  &&  \delta[j] \in T\\

M, \delta, j \models wins(r)  &  &\text{iff}  && \delta[ j] \in g(r)\\


M, \delta, j  \models legal(a^r(\bar{z}))  &  &\text{iff}  &&  a^r(v(z, \delta [j]): z \in \bar{z}) \in L(\delta[j]) 
\\

M, \delta, j \models  does(a^r(\bar{z}))  &  &\text{iff}  &&  \theta_r (\delta, j) = a^r(v(z, \delta [j]): z \in \bar{z}) 
\\

M, \delta, j \models \bigcirc \varphi  &  &\text{iff}  &&  \text{if } j < |\delta|\text{, then } M, \delta, j+1 \models \varphi \\

M, \delta, j  \models  z_1 > z_2 &  & \text{iff} &  &  v(z_1, \delta [j]) > v(z_2, \delta [j]) \\

M, \delta, j \models  z_1 < z_2 &  & \text{iff} &  &   v(z_1, \delta [j]) 	< v(z_2, \delta [j])\\

M, \delta, j  \models  z_1 =  z_2 &  & \text{iff} &  &   v(z_1,\delta [j]) = v(z_2, \delta [j])\\


M, \delta, j \models \langle\bar{z}\rangle  &  &\text{iff}  && 
\langle v(z, \delta [j]): z \in \bar{z} \rangle = \pi_\mathbb{Z}(\delta[j]) 
\\ 
\end{array}\]
\end{definition}


A formula $\varphi$ is globally true through $\delta$, denoted by $M, \delta \models \varphi $, if $M, \delta, j \models \varphi$ for any stage $j$ of $\delta$. A formula $\varphi$ is \textit{globally true} in an ST-Model $M$, written $M \models \varphi$, if $M, \delta \models \varphi$ for all complete paths $\delta$ in $M$, that is, $\varphi$ is true at every reachable state. 
A formula $\varphi$ is \textit{valid}, denoted by $\models \varphi$, if it is globally true in every ST-model of an appropriate signature. Finally, let $\Sigma$ be a set of formulas in $\mathcal{L}_{GDLZ}$, then $M$ is a \textit{model} of $\Sigma$ if $M \models \varphi$ for all $\varphi \in \Sigma$.

Whenever $j\geq |\delta|$, the validity of $M, \delta, j \models \bigcirc \varphi$ is irrelevant, since 
$\delta[j]$ is the last state reachable in $\delta$. A formula $\langle \bar{z} \rangle$ is valid at a stage $j$ in a path $\delta$ under $M$ only when it corresponds to the valuation of the numerical variables at $\delta[j]$.

The following propositions show that if a player does an action at a stage in a path, then (i) he does not any other action in the same stage and (ii) the action taken is legal.

\begin{proposition}
\label{prop:doesnotdoes}
$\models does(a^r(\bar{z})) \rightarrow $ 
$ \bigwedge_{b^r \neq a^r \in A^r}\bigwedge_{\bar{z}\text{}' \neq \bar{z} \in \mathbb{Z}^n}  \neg does(b^r(\bar{z}\text{}'))$.
\end{proposition}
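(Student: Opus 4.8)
The plan is to unwind the semantics of $does$ directly. Fix an ST-model $M$, a complete path $\delta$, and a stage $j$, and assume $M, \delta, j \models does(a^r(\bar{z}))$. By Definition~\ref{def:semantics}, this means $\theta_r(\delta, j) = a^r(v(z, \delta[j]) : z \in \bar{z})$, i.e., the individual action of agent $r$ at stage $j$ is exactly the concrete action obtained by evaluating $a^r$ on the parameter list $\bar{z}$ at state $\delta[j]$.

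Next I would argue by contradiction: suppose that for some $b^r \neq a^r \in A^r$ and some number list $\bar{z}' \neq \bar{z}$ we also have $M, \delta, j \models does(b^r(\bar{z}'))$. Then again by the semantic clause for $does$, $\theta_r(\delta, j) = b^r(v(z, \delta[j]) : z \in \bar{z}')$. Since $\theta_r(\delta, j)$ is a single well-defined element of $A^r$ (the agent performs exactly one action at each stage along a path, as fixed by the joint action $\theta(\delta, j) = d_j$ in Definition~\ref{def:path}), we get the identity of concrete actions $a^r(v(z, \delta[j]) : z \in \bar{z}) = b^r(v(z, \delta[j]) : z \in \bar{z}')$. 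Because $a^r$ and $b^r$ are syntactically distinct action symbols in $A^r$, this already forces a contradiction: two different action symbols cannot denote the same concrete action. (The parameter clause is only needed to handle the case $b^r = a^r$ with $\bar{z}' \neq \bar{z}$, which the statement as written excludes by ranging $b^r$ over $A^r \setminus \{a^r\}$; I would remark that the same evaluation argument handles that case too, since $v(\cdot, \delta[j])$ applied componentwise to two distinct lists over $\mathbb{Z} \cup X$ yields distinct tuples once the variables are resolved — or, more carefully, one restricts attention to ground lists where $\bar z' \neq \bar z$ entails distinct valuations.) Hence no such $does(b^r(\bar{z}'))$ can hold, and the conjunction on the right-hand side is satisfied.

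Finally, since the path $\delta$, the stage $j$, and the model $M$ were arbitrary, the implication holds at every stage of every complete path in every ST-model of the signature, so the formula is valid.

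The main obstacle is essentially bookkeeping rather than mathematical depth: one must be careful about the distinction between an action \emph{symbol} $a^r \in A^r$ and the \emph{concrete/instantiated} action $a^r(v(z,\delta[j]):z\in\bar z) \in \mathcal{A}$ that appears in $\theta_r(\delta,j)$ and in $L(\delta[j])$. The subtle point is whether distinct parameter lists can collapse to the same concrete action after evaluation of the numerical terms; the cleanest way to dispatch this is to note that the quantification in the statement is over $b^r \neq a^r$, so distinctness of the underlying action symbols already suffices and the parameter issue never arises for the stated claim.
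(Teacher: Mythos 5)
Your proposal is correct and follows essentially the same route as the paper's own proof: unfold the semantic clause for $does$, observe that $\theta_r(\delta,j)$ is a single action so it cannot simultaneously equal $a^r(v(z,\delta[j]):z\in\bar z)$ and any $b^r(\cdot)$ with $b^r\neq a^r$, and conclude that every conjunct $\neg does(b^r(\bar z'))$ holds. Your additional remark on the symbol-versus-instantiated-action distinction and on why the $b^r\neq a^r$ quantification sidesteps the parameter-evaluation collapse is a careful touch the paper glosses over, but it does not change the argument.
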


\begin{proof}
Assume $M, \delta, j\models does(a^r(\bar{z})) $ for some $\bar{z}$ iff $\theta_r(\delta, j) = does(a^r(\bar{z}))$. Then, for any $\bar{z}' \neq \bar{z}, b^r \neq a^r$, $\theta_r(\delta, j) \neq does(b^r$ $(\bar{z}'))$. Thereby, $M, \delta, j \not \models \bigwedge_{b^r\neq a^r \in A^r}\bigwedge_{\bar{z}\text{}' \neq \bar{z} \in \mathbb{Z}^n}  does(b^r(\bar{z}\text{}')) $ and $M, \delta, j \models \bigwedge_{b^r\neq a^r \in A^r}\bigwedge_{\bar{z}\text{}' \neq \bar{z} \in \mathbb{Z}^n}\neg does(b^r$ $(\bar{z}\text{}'))$.
\end{proof}

\begin{proposition}
\label{prop:doeslegal}
$\models does(a^r(\bar{z})) \rightarrow legal(a^r(\bar{z}))$.
\end{proposition}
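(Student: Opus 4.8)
The plan is to argue semantically. I would fix an arbitrary ST-model $M$, a complete path $\delta$ of $M$, and a stage $j$ on $\delta$, assume $M, \delta, j \models does(a^r(\bar{z}))$, and derive $M, \delta, j \models legal(a^r(\bar{z}))$. Since $M$, $\delta$ and $j$ are arbitrary, this establishes $\models does(a^r(\bar{z})) \rightarrow legal(a^r(\bar{z}))$. The proof is essentially an unfolding of Definition~\ref{def:semantics} combined with the legality requirement that is built into the notion of a path in Definition~\ref{def:path}, so it is close in spirit to Proposition~\ref{prop:doesnotdoes}.

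First I would unfold the premise: by Definition~\ref{def:semantics}, $M, \delta, j \models does(a^r(\bar{z}))$ means $\theta_r(\delta, j) = a^r(v(z, \delta[j]) : z \in \bar{z})$. In particular $\theta_r(\delta, j)$ is defined, so $j$ is not the final stage of $\delta$ and there is a joint action $d$ labelling the transition from $\delta[j]$ to $\delta[j+1]$ in $\delta$, with $d(r) = \theta_r(\delta, j)$. Next I invoke clause (ii) of Definition~\ref{def:path}, which requires $d(r) \in L(\delta[j])$ for every agent $r \in N$; hence $a^r(v(z, \delta[j]) : z \in \bar{z}) = d(r) \in L(\delta[j])$. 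Finally I read this back through Definition~\ref{def:semantics}: the statement $a^r(v(z, \delta[j]) : z \in \bar{z}) \in L(\delta[j])$ is by definition exactly $M, \delta, j \models legal(a^r(\bar{z}))$, which completes the derivation.

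The only delicate point is the index bookkeeping: one has to line up the convention for $\theta_r(\delta, j)$ used in the path semantics with the edge labels $d_1, d_2, \ldots$ of Definition~\ref{def:path} (the action taken at stage $j$ is the one labelling the step out of $\delta[j]$), and one has to observe that the numerical arguments $v(z, \delta[j])$ occur verbatim on both the $does$ side and the $legal$ side, because both are evaluated at the same state $\delta[j]$; thus no requantification over the parameter tuple $\bar{z}$ is needed. The boundary case in which $j$ lies beyond the length of $\delta$ is handled for free, since then $\theta_r(\delta, j)$ is undefined and the premise $does(a^r(\bar{z}))$ cannot hold, making the implication vacuously true. Apart from this, the argument is routine.
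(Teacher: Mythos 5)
Your proof is correct and follows essentially the same route as the paper's: unfold the semantics of $does$ to get $\theta_r(\delta,j) = a^r(v(z,\delta[j]):z\in\bar{z})$, invoke clause (ii) of the path definition to conclude this action lies in $L(\delta[j])$, and read that back as the semantics of $legal$. Your treatment is in fact slightly more careful than the paper's about the evaluated parameter tuple and the boundary case.
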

\begin{proof}
Assume $M, \delta, j  \models does(a^r(\bar{z}))$, then $a^r = \theta^r(\delta, j)$. And by the definition of $\delta$, $a^r(\bar{z}) \in L(\delta[j])$, so $M, \delta, j \models legal(a^r(\bar{z}))$.  
\end{proof}

Next, we illustrate the representation of a game with numerical features in GDLZ. First, we define the game signature and the game description in $\mathcal{L}_{GDLZ}$. Next, we define the ST-model by which it is possible to evaluate the $\mathcal{L}_{GDLZ}$ semantics. Finally, we illustrate a path in the game.

\begin{example}
\label{exemplo}
\textit{\textbf{($\langle \gamma_1, \cdot \cdot \cdot, \gamma_k \rangle$-Nim Game)} A $\langle \gamma_1, \cdot \cdot \cdot, \gamma_k \rangle$-Nim Game consists in $k$ heaps. Each heap starts with $\gamma_i$ sticks, where $1 \leq i \leq k$. Two players take turns in removing sticks from one heap. The game ends when all heaps are empty. A player wins if it is not his turn when the game ends.  }

To represent a $\langle \gamma_1, \cdot \cdot \cdot, \gamma_k \rangle$-Nim Game 
in terms in GDLZ, we first specify the agents, the actions, the propositions, and the numerical variables involved in the game. Thus,  the game signature, written $\mathcal{S}_{k\text{-nim}}$, is described as follows:  

\begin{itemize}
\item $N_{k\text{-nim}} = \{Player_1, Player_2\}$;
\item $A_{k\text{-nim}}^r = \{reduce^r (m,s)\mid s\in \mathbb{N}, 1\leq m\leq k\}\cup \{noop^r\}$, where $reduce^r (m,s)$ denotes the action that player $r$ removes $s$ sticks from the $m$-th heap and $noop^r$ denotes that player $r$ does action noop; 

\item $\Phi _{k\text{-nim}} =\{ turn(r) \mid r \in \{Player_1, Player_2\} \}$, where $turn(r)$ says that it is player $r$'s turn now;
\item $X_{k\text{-nim}} = \langle heap_i \mid 1\leq i \leq k\rangle$, where $heaps_i$ represents the amount of sticks in the $i$-th heap.

\end{itemize}

Given a player $r \in N_{k-min}$, we denote $-r$ as the opponent of $r$, i.e.
 $-r = Player_1$ if $r = Player_2$ and $-r = Player_2$ otherwise. 
The rules of the $\langle \gamma_1, \cdot \cdot \cdot, \gamma_k \rangle$-Nim Game can be expressed by GDLZ- formulas as shown Figure \ref{fig:nim}.

\begin{figure}
\centering
\begin{mdframed}

\begin{enumerate}
\item $initial \leftrightarrow turn(Player_1) \land \neg turn(Player_2) \land \langle \gamma_1, \cdot \cdot \cdot, \gamma_k \rangle$
\item $\bigwedge_{r\in N} wins(r) \leftrightarrow \neg turn(r) \land turn(-r)\land  \langle 0, \cdot \cdot \cdot, 0 \rangle$
\item $terminal \leftrightarrow \langle 0, \cdot \cdot \cdot, 0 \rangle $ 
\item $\bigwedge_{r\in N} \bigwedge_{m \in \{1...k\}}  \bigwedge_{s \in \{1...\gamma_m\}} 
legal(reduce^r(m,s)) \leftrightarrow  1 \leq s \leq heap_m $ 

\hspace{40pt}
$\land  turn(r)$

\item $\bigwedge_{r\in N} legal(noop^r)\leftrightarrow \neg turn(r)$

\item $\bigwedge_{i \in \{1...k\}}  \bigwedge_{h_i \in \{1...\gamma_i\}}  terminal  \text{ }\land$ $\langle h_1, \cdot \cdot \cdot, h_k \rangle \rightarrow \bigcirc \langle h_1, \cdot \cdot \cdot, h_k \rangle$
\item  $\bigwedge_{i \in \{1...k\}}  \bigwedge_{h_i \in \{1...\gamma_i\}} \neg terminal \land \langle h_1, \cdot \cdot \cdot, h_k \rangle$

\hspace{40pt}
$\land (\bigvee_{r\in N} \bigvee_{m \in \{1...k\}}  \bigvee_{s \in \{1...\gamma_m\}}   $ 
$does(reduce^r(m, s)) ) \rightarrow$	

\hspace{40pt}
$\bigcirc \langle h_1, \cdot \cdot \cdot,sub(h_m, s), \cdot \cdot \cdot, h_k \rangle 
$
\item $\bigwedge_{r\in N} turn(r) \rightarrow \bigcirc \neg turn(r) \land \bigcirc turn(-r)$
\end{enumerate}
\end{mdframed}
\caption{$\langle \gamma_1, \cdot \cdot \cdot, \gamma_k \rangle$-Nim Game represented by $\Sigma_{k\text{-nim}}$}
\label{fig:nim}
\end{figure}

Statement $1$ says that the $Player_1$ has the first turn and that the $k$ heaps starts with $ \gamma_1, \cdot \cdot \cdot, \gamma_k$ sticks, respectively. Statement $2$ and $3$ specify the winning states for each player and the terminal states of the game, respectively. The player who has not the turn when all the heaps become empty wins the game, and the game ends if all heaps are empty.  Statements $4$ and $5$ specify the preconditions of each action (legality). The player who has the turn can reduce $s$ sticks from the $m$-th heap if $1 \leq s \leq heap_m$. 
The other player can only do noop. Statements $6$ and $7$ define what is true at the next state: 
the $m$-th heap will be subtracted by $s$ if a player takes the action of reducing the $m$-th heap by $s$, otherwise it will keep its current value.  Finally, Statement $8$ specifies the turn-taking. Let $\Sigma_{k\text{-nim}}$ be the set of rules $1$-$8$.


%
Since the semantics for the language is based on the state transition model, 
we next specify the ST-model for this game, written $M_{k\text{-nim}}$, 
as follows:

\begin{itemize}
\item $W_{k\text{-nim}} = \{\langle t_1, t_2,\langle x_1, \cdot \cdot \cdot, x_k\rangle\rangle : t_1\in \{turn(Player_1), \neg turn(Player_1)\} \text{ \& }$ $t_2\in \{turn(Player_2), \neg turn(Player_2)\} \text{ \& } x_i \in \mathbb{N} \text{,  for }  1\leq i\leq k \}$ is the set of states, where $t_1, t_2$ specify the turn taking and $x_i$ represents the amount of sticks in the $i$-th heap, i.e. the integer value assigned to $heap_i$;
\item $\bar{w}_{k\text{-nim}} = \langle turn(Player_1), \neg turn(Player_2),\langle \gamma_1, \cdot \cdot \cdot, \gamma_k \rangle\rangle $; 
\item $T_{k\text{-nim}} = \{\langle turn(Player_1), \neg turn(Player_2), \langle 0, \cdot \cdot \cdot, 0\rangle \rangle,  \langle \neg turn(Player_1), $  \linebreak  $ turn(Player_2),$ $\langle 0, \cdot \cdot \cdot, 0\rangle \rangle \}$, i.e. all heaps are empty;

\item $L_{k\text{-nim}} = \{(\langle t_1, t_2,\langle x_1, \cdot \cdot \cdot, x_k\rangle\rangle,  reduce^r(m, s)) : t_r = turn(r) \text{ \& } 1 \leq s \leq x_m 
\}  
\cup \{ (\langle t_1, t_2,\langle x_1, \cdot \cdot \cdot, x_k\rangle\rangle,  noop^r) :  t_r = \neg turn(r)\}$, for all $\langle t_1, t_2,\langle x_1, \cdot \cdot \cdot, x_k\rangle\rangle \in W_{k\text{-nim}}$ and $r \in N_{k\text{-nim}}$; 

\item $U_{k\text{-nim}}: W_{k\text{-nim}} \times D_{k\text{-nim}} \rightarrow W_{k\text{-nim}}$ is defined as follows: for all $\langle t_1, t_2,\langle x_1, \cdot \cdot \cdot, x_k\rangle\rangle \in W_{k\text{-nim}}$ and all $( reduce^r(m,s), noop^{-r}) \in D_{k\text{-nim}}$, let $U_{k\text{-nim}}(\langle t_1, t_2,$ $\langle x_1, \cdot \cdot \cdot, x_k\rangle\rangle, ( reduce^r(m,s), noop^{-r} ) ) = \langle t_1', t_2',\langle x_1', \cdot \cdot \cdot, x_k'\rangle\rangle$, such that $\langle t_1', t_2',$ $\langle x_1', \cdot \cdot \cdot, x_k'\rangle\rangle$ are the same as $\langle t_1, t_2,\langle x_1, \cdot \cdot \cdot, x_k\rangle\rangle$, except by its components $t_1', t_2'$ and $x_i'$ which are updated as follows: 
$t_1' = turn(Player_1)$ iff $t_2 = turn(Player_2)$, otherwise $t_1' = \neg turn(Player_1)$; 
$t_2' = turn(Player_2)$ iff $t_1 = turn(Payer_1)$, otherwise, $t_2' = \neg turn(Player_2)$; 
 and for $1 \leq i \leq k$: 
\[x_i' = \begin{dcases}  
			x_i - s & \text{if }  reduce^r(i,s) \text{ and } 1 \leq s \leq x_i
            \\
            x_i & \text{otherwise} \\\end{dcases}\]
            
           For all  $\langle t_1, t_2,\langle x_1, \cdot \cdot \cdot, x_k\rangle\rangle \in W_{k\text{-nim}}$ and all $( a^r, a^{-r}) \neq ( reduce^r(m,s),$ \linebreak$ noop^{-r}) \in D_{k\text{-nim}}$, let $U_{k\text{-nim}}(\langle t_1, t_2,\langle x_1, \cdot \cdot \cdot, x_k\rangle\rangle, ( a^r, a^{-r} ) ) =\langle t_1, t_2,\langle x_1, \cdot \cdot \cdot, x_k\rangle\rangle$.

\item $g_{k\text{-nim}}(r) = \{\langle t_1, t_2, \langle  0,\cdot \cdot \cdot ,0 \rangle \rangle\}$, where $t_r  = \neg turn(r)$ and $t_{-r} = turn(r)$. 
\end{itemize}
Finally, for each state $w = \langle t_1, t_2,\langle x_1, \cdot \cdot \cdot, x_k\rangle\rangle\in W_{k\text{-nim}}$, let
\begin{itemize}
\item 
$\pi_{\Phi , k\text{-nim}}(w) = \{turn(r): t_r = turn(r)\}$; 
\item 
$\pi_{\mathbb{Z}, k\text{-nim}}(w) = \langle x_1, \cdot \cdot \cdot, x_k \rangle $. 
\end{itemize}

Let $M_{k\text{-nim}} = (W_{k\text{-nim}}, \bar{w}_{k\text{-nim}}, T_{k\text{-nim}}, L_{k\text{-nim}}, $ $U_{k\text{-nim}}, g_{k\text{-nim}}, \pi_{\Phi , k\text{-nim}}, $ $\pi_{\mathbb{Z}, k\text{-nim}})$ be the ST-model for the $k$-Nim Game.

\medskip

Consider, for instance, $k=2$ and $\langle \gamma_1, \gamma_2 \rangle = \langle 5, 3 \rangle$, i.e. there are only two heaps and their starting values are $5$ and $3$, respectively. 
Figure \ref{fig:flow-nim} illustrates a path in $M_{k\text{-nim}}$. The state $w_0$  represents the initial state. In $w_0$, it is the turn of $Player_1$ and he removes $5$ sticks from the first heap. In the state $w_1$, the first heap is empty and players can only remove sticks from the second heap. It is now $Player_2$'s turn and he reduces $2$ sticks from the second heap. In the state $w_2$,  $Player_1$ removes the last stick  from the second heap. Finally, in the state $w_3$, there is no stick remaining in any heap, thereby it is a terminal state. Since it is $Player_2$'s turn, $Player_1$ wins the game.

\begin{figure}[ht]
\centering
\tikzset{every picture/.style={line width=0.75pt}} 

\begin{tikzpicture}[x=0.4pt,y=0.4pt,yscale=-1,xscale=1]

\draw    (703.5,252) -- (754,252) -- (754,566) -- (706.5,566) ;
\draw [shift={(704.5,566)}, rotate = 360] [color={rgb, 255:red, 0; green, 0; blue, 0 }  ][line width=0.75]    (10.93,-3.29) .. controls (6.95,-1.4) and (3.31,-0.3) .. (0,0) .. controls (3.31,0.3) and (6.95,1.4) .. (10.93,3.29)   ;

\draw   (70,30) -- (360.5,30) -- (360.5,301) -- (70,301) -- cycle ;
\draw    (360.5,252) -- (377.5,252) -- (409,252) ;
\draw [shift={(411,252)}, rotate = 180] [color={rgb, 255:red, 0; green, 0; blue, 0 }  ][line width=0.75]    (10.93,-3.29) .. controls (6.95,-1.4) and (3.31,-0.3) .. (0,0) .. controls (3.31,0.3) and (6.95,1.4) .. (10.93,3.29)   ;

\draw    (70,202) -- (360.5,202) ;

\draw    (145.5,32) -- (145.5,301) ;

\draw    (70,125.5) -- (360.5,125.5) ;

\draw   (413,30) -- (703.5,30) -- (703.5,301) -- (413,301) -- cycle ;
\draw    (413,202) -- (703.5,202) ;

\draw    (488.5,32) -- (488.5,301) ;

\draw    (413,125.5) -- (703.5,125.5) ;

\draw   (70,343) -- (360.5,343) -- (360.5,614) -- (70,614) -- cycle ;
\draw    (70,515) -- (360.5,515) ;

\draw    (145.5,345) -- (145.5,614) ;

\draw    (70,438.5) -- (360.5,438.5) ;

\draw   (413,343) -- (703.5,343) -- (703.5,614) -- (413,614) -- cycle ;
\draw    (413,515) -- (703.5,515) ;

\draw    (488.5,345) -- (488.5,614) ;

\draw    (413,438.5) -- (703.5,438.5) ;

\draw    (413,566) -- (380.5,566) -- (362.5,566) ;
\draw [shift={(360.5,566)}, rotate = 360] [color={rgb, 255:red, 0; green, 0; blue, 0 }  ][line width=0.75]    (10.93,-3.29) .. controls (6.95,-1.4) and (3.31,-0.3) .. (0,0) .. controls (3.31,0.3) and (6.95,1.4) .. (10.93,3.29)   ;

\draw (555.5,80) node [scale=0.8]  {$ \begin{array}{l}
\neg initial\\
\neg terminal\\
\neg wins( Player_{1})\\
\neg wins( Player_{2})
\end{array}$};
\draw (450.5,67) node [scale=0.8,rotate=-270] [align=left] {{\small $\overline{w}_{k\text{\mbox{-}nim}}$ }\\{\small $T_{k\text{\mbox{-}nim}}$}\\{\small $g_{k\text{\mbox{-}nim}}$}};
\draw (554.5,164.5) node [scale=0.8]  {$ \begin{array}{l}
\neg turn( Player_{1})\\
turn( Player_{2})\\
\langle 0,3\rangle 
\end{array}$};
\draw (597,255) node [scale=0.8]  {$ \begin{array}{l}
legal(noop^{Player_{1}} )\\
legal(reduce^{Player_{2}}( m,s) )\\
\text{For } 1\leq m\ \leq 2,\ \\
\text{and } 1\ \leq s\ \leq heap_{m}
\end{array}$};
\draw (442.5,165) node [scale=0.8,rotate=-270] [align=left] {$ \begin{array}{l}
\pi _{\Phi ,k\text{\mbox{-}nim}} ,\\
\pi _{\mathbb{Z} ,k\text{\mbox{-}nim}}
\end{array}$};
\draw (433,238) node [scale=0.8,rotate=-270] [align=left] {$L_{k\text{\mbox{-}nim}}$};
\draw (543.5,321.5) node  [align=left] {$\displaystyle w_{1}$};
\draw (388,438.5) node [scale=0.8,rotate=-270]  {$ \begin{array}{l}
does(reduce^{Player_{1}}( 2,1) )\\
does(noop^{Player_{2}} )
\end{array}$};
\draw (212.5,393) node [scale=0.8]  {$ \begin{array}{l}
\neg initial\\
terminal\\
wins( Player_{1})\\
\neg wins( Player_{2})
\end{array}$};
\draw (107.5,380) node [scale=0.8,rotate=-270] [align=left] {{\small $\overline{w}_{k\text{\mbox{-}nim}}$ }\\{\small $T_{k\text{\mbox{-}nim}}$}\\{\small $g_{k\text{\mbox{-}nim}}$}};
\draw (211.5,477.5) node [scale=0.8]  {$ \begin{array}{l}
\neg turn( Player_{1})\\
turn( Player_{2})\\
\langle 0,0\rangle 
\end{array}$};
\draw (254,568) node [scale=0.8]  {$ \begin{array}{l}
legal(noop^{Player_{1}} )\\
legal(reduce^{Player_{2}}( m,s) )\\
\text{For } 1\leq m\ \leq 2,\ \\
\text{and } 1\ \leq s\ \leq heap_{m}
\end{array}$};
\draw (99.5,478) node [scale=0.8,rotate=-270] [align=left] {$ \begin{array}{l}
\pi _{\Phi ,k\text{\mbox{-}nim}} ,\\
\pi _{\mathbb{Z} ,k\text{\mbox{-}nim}}
\end{array}$};
\draw (90,551) node [scale=0.8,rotate=-270] [align=left] {$L_{k\text{\mbox{-}nim}}$};
\draw (200.5,634.5) node  [align=left] {$\displaystyle w_{3}$};
\draw (731,438.5) node [scale=0.8,rotate=-270]  {$ \begin{array}{l}
does(noop^{Player1} )\\
does(reduce^{Player_{2}}( 2,2) )
\end{array}$};
\draw (555.5,393) node [scale=0.8]  {$ \begin{array}{l}
\neg initial\\
\neg terminal\\
\neg wins( Player_{1})\\
\neg wins( Player_{2})
\end{array}$};
\draw (450.5,380) node [scale=0.8,rotate=-270] [align=left] {{\small $\overline{w}_{k\text{\mbox{-}nim}}$ }\\{\small $T_{k\text{\mbox{-}nim}}$}\\{\small $g_{k\text{\mbox{-}nim}}$}};
\draw (554.5,477.5) node [scale=0.8]  {$ \begin{array}{l}
turn( Player_{1})\\
\neg turn( Player_{2})\\
\langle 0,1\rangle 
\end{array}$};
\draw (597,568) node [scale=0.8]  {$ \begin{array}{l}
legal(reduce^{Player_{1}}( m,s) )\\
\text{For } 1\leq m\ \leq 2,\ \\
\text{and } 1\ \leq s\ \leq heap_{m}\\
legal(noop^{Player_{2}} )
\end{array}$};
\draw (442.5,478) node [scale=0.8,rotate=-270] [align=left] {$ \begin{array}{l}
\pi _{\Phi ,k\text{\mbox{-}nim}} ,\\
\pi _{\mathbb{Z} ,k\text{\mbox{-}nim}}
\end{array}$};
\draw (433,551) node [scale=0.8,rotate=-270] [align=left] {$L_{k\text{\mbox{-}nim}}$};
\draw (543.5,634.5) node  [align=left] {$\displaystyle w_{2}$};
\draw (200.5,321.5) node  [align=left] {$\displaystyle w_{0}$};
\draw (90,238) node [scale=0.8,rotate=-270] [align=left] {$L_{k\text{\mbox{-}nim}}$};
\draw (99.5,165) node [scale=0.8,rotate=-270] [align=left] {$ \begin{array}{l}
\pi _{\Phi ,k\text{\mbox{-}nim}} ,\\
\pi _{\mathbb{Z} ,k\text{\mbox{-}nim}}
\end{array}$};
\draw (254,255) node [scale=0.8]  {$ \begin{array}{l}
legal(reduce^{Player_{1}}( m,s) )\\
\text{For } 1\leq m\ \leq 2,\ \\
\text{and } 1\ \leq s\ \leq heap_{m}\\
legal(noop^{Player_{2}} )
\end{array}$};
\draw (211.5,164.5) node [scale=0.8]  {$ \begin{array}{l}
turn( Player_{1})\\
\neg turn( Player_{2})\\
\langle 5,3\rangle 
\end{array}$};
\draw (107.5,67) node [scale=0.8,rotate=-270] [align=left] {{\small $\overline{w}_{k\text{\mbox{-}nim}}$ }\\{\small $T_{k\text{\mbox{-}nim}}$}\\{\small $g_{k\text{\mbox{-}nim}}$}};
\draw (212.5,80) node [scale=0.8]  {$ \begin{array}{l}
initial\\
\neg terminal\\
\neg wins( Player_{1})\\
\neg wins( Player_{2})
\end{array}$};
\draw (388,125.5) node [scale=0.8,rotate=-270]  {$ \begin{array}{l}
does(reduce^{Player_{1}}( 1,5) )\\
does(noop^{Player_{2}} )
\end{array}$};

\end{tikzpicture}
\caption{A Path in $M_{k\text{-nim}}$, where $k=2$ and $\langle \gamma_1, \gamma_2 \rangle = \langle 5, 3 \rangle$} \label{fig:flow-nim}
\end{figure}
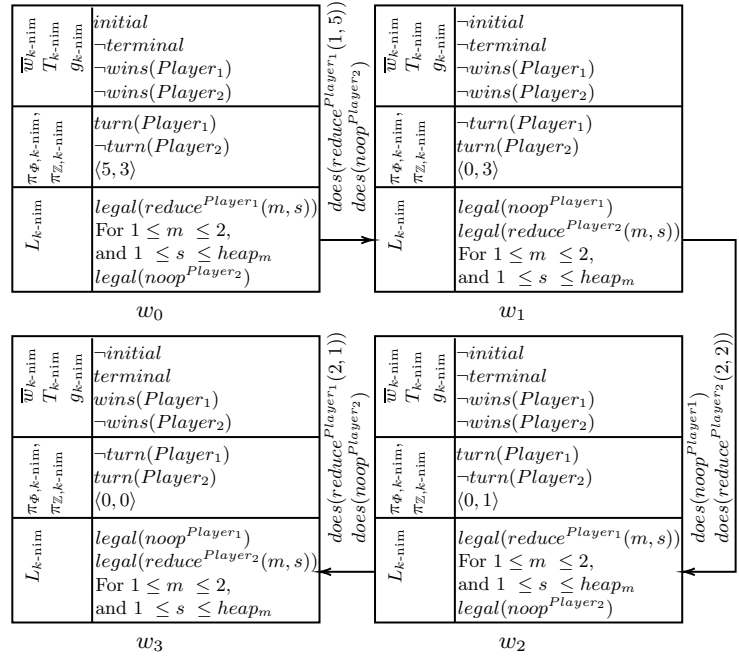

\end{example}



The next proposition shows that soundness does hold, i.e. the framework provides a sound description for the $k$-Nim Game. 
Notice that as $M_{k\text{-nim}}$ is not the unique model for $\Sigma_{k\text{-nim}}$, thereby, the completeness does not hold.


\begin{proposition}
\label{prop:modelknim}
$M_{k\text{-nim}}$  is an ST-model and it is a model of  $\Sigma_{k\text{-nim}}$.
\end{proposition}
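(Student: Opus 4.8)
The plan is to argue in three stages: that $M_{k\text{-nim}}$ meets Definition~\ref{def:model}; that a structural invariant holds on the states reachable along paths; and that $M_{k\text{-nim}}\models\varphi$ for each of the eight formulas of Figure~\ref{fig:nim}. For the first stage, every clause of Definition~\ref{def:model} is an immediate type check ($W_{k\text{-nim}}$ nonempty, $\bar{w}_{k\text{-nim}}\in W_{k\text{-nim}}$, $T_{k\text{-nim}}\subseteq W_{k\text{-nim}}$, $L_{k\text{-nim}}\subseteq W_{k\text{-nim}}\times\mathcal{A}_{k\text{-nim}}$, $g_{k\text{-nim}}$ into $2^{W_{k\text{-nim}}}$, and the right codomains for the two valuation functions) except the requirement that $U_{k\text{-nim}}$ be a total function $W_{k\text{-nim}}\times D_{k\text{-nim}}\to W_{k\text{-nim}}$. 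Its two defining cases --- joint actions of the form $(reduce^r(m,s),noop^{-r})$ and all remaining joint actions --- are exhaustive and disjoint, so $U_{k\text{-nim}}$ is total; the turn components it returns are again among $turn(Player_i)$ and $\neg turn(Player_i)$; and the heap entry $x_i-s$ is used only under the guard $1\le s\le x_i$, so it stays in $\mathbb{N}$ and the resulting tuple is again a state.

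For the second stage I would show, by induction on $j$, that every state $\delta[j]$ on a complete path $\delta$ gives the turn to exactly one of the two players, i.e.\ its first two components are $(turn(Player_1),\neg turn(Player_2))$ or $(\neg turn(Player_1),turn(Player_2))$. The base case is $\bar{w}_{k\text{-nim}}$. For the step, by the path conditions no state other than possibly the last is terminal, so for $j<|\delta|$ the state $\delta[j]$ is non-terminal with a unique player $r$ to move; by the definition of $L_{k\text{-nim}}$, $r$'s only legal actions there are the $reduce^r(m,s)$ and $-r$'s only legal action is $noop^{-r}$, so the joint move is $(reduce^r(m,s),noop^{-r})$ for some $m,s$; and that clause of $U_{k\text{-nim}}$ swaps the two turn markers. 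This argument also records two facts used repeatedly afterwards: along any path every transition is produced by that clause of $U_{k\text{-nim}}$, and the unique terminal stage of a complete path is its last one, so $\bigcirc\psi$ is vacuously satisfied at every terminal stage.

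For the third stage I would unwind Definition~\ref{def:semantics} for each rule, at every complete path $\delta$ and stage $j$ (recall that $v$ evaluates a concrete integer to itself and the variable $heap_m$ to $\pi_{\mathbb{Z},k\text{-nim}}^m$). Rules~1--5 are direct transcriptions of the definitions of $\bar{w}_{k\text{-nim}}$, $g_{k\text{-nim}}$, $T_{k\text{-nim}}$ and $L_{k\text{-nim}}$; the only non-routine point is the implication $\langle 0,\dots,0\rangle\to terminal$ of rule~3, where the invariant is needed to conclude that a reachable state with all heaps $0$ is one of the two terminal states. Rule~6 holds vacuously --- its antecedent $terminal\wedge\langle h_1,\dots,h_k\rangle$ with every $h_i\ge 1$ is unsatisfiable, and in any case a terminal stage is the last stage. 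For rules~7 and~8 and a stage $j<|\delta|$, the invariant tells us that $\delta[j]$ is non-terminal with a unique player $r$ to move and that $\delta[j+1]$ is obtained by the $reduce/noop$ clause of $U_{k\text{-nim}}$; Proposition~\ref{prop:doeslegal} turns $does(reduce^r(m,s))$ into $legal(reduce^r(m,s))$, hence into $1\le s\le heap_m$ at $\delta[j]$, so the subtraction performed by $U_{k\text{-nim}}$ coincides with the value of the term $sub(h_m,s)$; reading off $\pi_{\Phi,k\text{-nim}}(\delta[j+1])$ and $\pi_{\mathbb{Z},k\text{-nim}}(\delta[j+1])$ then matches the consequents $\bigcirc\neg turn(r)\wedge\bigcirc turn(-r)$ and $\bigcirc\langle h_1,\dots,sub(h_m,s),\dots,h_k\rangle$, while for terminal $\delta[j]$ both consequents are vacuous.

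The step I expect to be the main obstacle is rule~7: it is the one place that requires simultaneously pinning down the joint action actually taken (from the path conditions together with the one-turn invariant), invoking Proposition~\ref{prop:doeslegal} so that $x_m-s$ agrees with $sub(h_m,s)$, and noting that the conjunct is vacuous whenever $\delta[j]$ has an empty heap --- so that letting the indices $h_i$ range over $\{1,\dots,\gamma_i\}$ rather than $\{0,\dots,\gamma_i\}$ does no harm to soundness (this gap is, incidentally, exactly why $M_{k\text{-nim}}$ is not the only model of $\Sigma_{k\text{-nim}}$). Everything else reduces to matching Definition~\ref{def:semantics} against the explicit data defining $M_{k\text{-nim}}$.
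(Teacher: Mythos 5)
Your proposal is correct and follows the same overall strategy as the paper's proof: treat the ST-model conditions as a routine check, then verify each rule of Figure~\ref{fig:nim} at every stage of every complete path by unwinding Definition~\ref{def:semantics} against the explicit components of $M_{k\text{-nim}}$ (the paper only writes out rules 4 and 7 and declares the remaining rules ``similar''). The genuine difference is your explicit turn invariant, proved by induction along paths. The paper never states it, yet tacitly relies on it: its verification of rule 7 applies the $(reduce^r(m,s),noop^{-r})$ clause of $U_{k\text{-nim}}$ without arguing that the joint action actually has that shape (which needs exactly your observation that at a reachable non-terminal state one player's only legal actions are reduces and the other's is $noop$), and the right-to-left direction of rule 3 ($\langle 0,\dots,0\rangle \to terminal$) is false at the states of $W_{k\text{-nim}}$ where both or neither turn marker holds, so one must restrict to reachable states --- again your invariant. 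So your route buys rigor precisely where the paper waves its hands, at the cost of an extra lemma; the paper's route buys brevity by checking only two representative rules. One minor simplification: legality of the performed action is already guaranteed by clause (ii) of Definition~\ref{def:path}, so invoking Proposition~\ref{prop:doeslegal} in rule 7 is a convenience rather than a necessity.
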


\begin{proof}
It is routine to check that $M_{k\text{-nim}}$  is actually an ST-model. 
 Given any complete path $\delta$, any stage $t$ on $\delta$ in $M_{k\text{-nim}}$, we need to verify that each rule is true at $t$ of $\delta$ under $M_{k\text{-nim}}$.
 
Let us consider Rule $4$. 
Assume $M_{k\text{-nim}}, \delta, t \models \bigwedge_{m \in \{1...k\}}  \bigwedge_{s \in \{1...\gamma_m\}}  legal$ \linebreak $(reduce^r(m,s))$ iff  $reduce^r(m,s) $ $\in L_{k\text{-nim}}(\delta[t])$ iff $t_r = turn(r)$ and 
$1 \leq s \leq heap_{m}$ (by the definition of $L_{k\text{-nim}}$) iff $turn(r) \in \pi_{\Phi , k\text{-nim}}(\delta[t])$ 
(by the definition of $\pi_{\Phi , k\text{-nim}}$) 
and  $1 \leq s \leq x_{m}$, where $x_m$ is the value assigned to $heap_m$ at stage $\delta[t]$; 
iff $M_{k\text{-nim}}, \delta, t \models  1 \leq s \leq heap_m \land  turn(r) $. 

        
Let us verify Rule $7$. Assume $M_{k\text{-nim}}, \delta, t \models \bigwedge_{i \in \{1...k\}}  \bigwedge_{h_i \in \{1...\gamma_i\}} 
 \neg terminal \land \langle h_1, \cdot \cdot \cdot, h_k \rangle \land \bigvee_{r\in N} \bigvee_{m \in \{1...k\}}  \bigvee_{s \in \{1...\gamma_m\}}  does(reduce^r(m, s))$ . 
Since $\neg terminal$, by the path definition, we have $t <|\delta|$.
For some $r \in N$, $1 \leq m \leq k$ and $1 \leq s \leq \gamma_m$, it is true that $does(reduce^r(m, s))$, then $\theta_r(\delta, t) = reduce^r(m, s) \in L_{k\text{-nim}}(\delta[t])$. Since $M_{k\text{-nim}}, \delta, t \models \langle h_1, \cdot \cdot \cdot, h_k \rangle$, for some $h_i,\in \{1, \cdot \cdot \cdot, \gamma_i\}$ and any $i \in \{1, \cdot \cdot \cdot, k\}$, by the definition of $U_{k\text{-nim}}$, we have $M_{k\text{-nim}}, \delta, t + 1\models  \langle x_1', \cdot \cdot\cdot, x_k'\rangle $, where $x_{m}' = heap_{m} - s$ if $reduce^r(m, s)$ and for any $j\neq m$ and $1 \leq j \leq k$, $x_j' = heap_j$. 
By function $v$, we know that $v(sub(h_{m},s)) = heap_{m} - s$.
Thereby, $M_{k\text{-nim}}, \delta, t+1 \models \langle h_1, \cdot \cdot \cdot,sub(h_m,$ $ s), \cdot \cdot \cdot, h_k \rangle $ and so
 so $M_{k\text{-nim}}, \delta, t \models \bigcirc \langle h_1, \cdot \cdot \cdot,sub(h_m, s), \cdot \cdot \cdot, h_k \rangle $.

The remaining rules are proved in a similar way.
\end{proof}

In the next section, we show that the model checking for GDLZ is PTIME, which is the same complexity then the model checking for GDL. 
In other words, the addition of numerical features in GDL does not increase the complexity at verifying the validity of a formula at a stage of a path in a model. 

\subsection{Model Checking}
\label{sec:Model-Checking}

The \textit{model checking problem} for GDLZ 
is the following: Given a GDLZ-formula~$\varphi$, an ST-model $M$, a path $\delta$ of $M$ and a stage $j$ on $\delta$,  
determine whether $M, \delta, j \models \varphi$ or not.  

Let $Sub(\varphi)$ be the set of all subformulas\footnote{We say that $\psi$ is a subformula of $\varphi \in \mathcal{L}_{GDLZ}$ if either (i) $\psi = \varphi$; (ii) $\varphi$ is of the form $\neg \varphi'$ or $\bigcirc \varphi '$ and $\psi$ is a subformula of $\varphi '$; or (iii) $\varphi$ is of the form $\varphi' \land \varphi''$ and $\psi$ is a subformula of either $\varphi'$ or $\varphi''$. } 
of $\varphi$.
Algorithm \ref{alg:MC} works in the following way: first it gets all subformulas of $\varphi$ and orders them in $S$ by its ascending length. Thus, $S(|\varphi|) = \varphi$, i.e. the position $|\varphi|$ in the vector $S$ corresponds to the formula $\varphi$ itself,  and if $\phi_i$ is a subformula of $\phi_j$, then $i < j$. An induction on $S$ label each subformula $\phi_i$ depending on whether or not $\phi_i$ is true in $M$ at $\delta[j]$.  If $\phi_i$ does not have any subformula, its truth value is obtained directly from the semantics. 
Since $S$ is ordered by the formulas length, if $\phi_i$ 
is either in the form $\phi' \land \phi''$ or $\neg \phi'$ the algorithm labels $\phi_i$ according to 
the label assigned to $\phi'$ and/or $\phi''$. If $\phi_i$ is in the form $\bigcirc \phi'$, its label will be recursively defined according to $\phi'$ truth value in $\delta[j+1]$.
As Algorithm \ref{alg:MC} visits each node at most once, and the number of nodes in the tree is not greater than the size of $\varphi$, 
it can be clearly
implemented in a polynomial-time deterministic Turing machine with PTIME.

\begin{algorithm}
\caption{$isTrue(M, \delta, j,  \varphi)$}\label{alg:MC}
\hspace*{\algorithmicindent} \textbf{Input:} an ST-model $M$, a path $\delta$ of $M$, a stage $j$ and 
a  formula $\varphi \in  \mathcal{L}_{GDLZ}$.
 \\
 \hspace*{\algorithmicindent} \textbf{Output:} \textbf{true} if $M, \delta, j \models \varphi$, and \textbf{false} otherwise
\begin{algorithmic}[1]
\State $S \gets $ $Sub(\varphi)$ ordered by ascending length
\State Let $reg[1 \cdot \cdot \cdot size(S)]$ be a boolean array

\For {$i \gets 1$ to $size(S)$} 
\State $\phi \gets S[i]$
\If {($\phi = \phi'\land \phi''$)}
	\State $reg[i] \gets reg[getIndex(S, \phi')] \land reg[getIndex(S, \phi'')]$
\ElsIf {($\phi = \bigcirc\phi'$)}
	\State $reg[i] \gets isTrue(M, \delta, j+1, \phi')$
\ElsIf {($\phi = \neg\phi'$)}
	\State $reg[i] \gets \neg  reg[getIndex(S, \phi')]$
\Else $\text{ } reg[i] \gets M, \delta, j \models \phi$ 
\EndIf
\EndFor
\Return $reg[size(S)] $
\end{algorithmic}
\end{algorithm}


In Section \ref{sec:GDLTranslation} we show that $\mathcal{L}_{GDL} \subseteq \mathcal{L}_{GDLZ}$, i.e. any formula in GDL is also a formula in GDLZ. Thereby, Algorithm \ref{alg:MC} can also be used in the model checking problem for GDL.

\section{Translation Between GDLZ and GDL}
\label{sec:GDLxZ}

In this section, we investigate translation maps among GDLZ and GDL models and descriptions.  
We first consider the general case where the GDLZ ST-model can have infinite components. Next, we restrict to the case where a GDLZ ST-model is finite.
Finally, we compare both languages in order to show the succinctness of GDLZ descriptions over GDL descriptions.

Given a GDLZ ST-model $M$, a complete path $\delta$ in $M$ 
and a formula $\varphi \in \mathcal{L}_{GDLZ}$, 
in the Sections \ref{sec:ModelOverPathTranslation} and \ref{sec:ModelTranslation} our goal is 
to construct a GDL ST-model $M'$, a path $\delta'$ in $M'$ and a formula $\varphi' \in \mathcal{L}_{GDL}$ such that,  for any stage $j$ on $\delta$, if $M, \delta, j \models \varphi$ then $M', \delta', j \models \varphi'$.


\subsection{From GDLZ Paths and Models to GDL Models}
\label{sec:ModelOverPathTranslation}


In a GDL ST-model,  the sets of states, actions and atomic propositions are finite. Since it does not hold for GDLZ ST-models, it is not possible to define a complete translation from every GDLZ model to a GDL model. However, since any GDLZ path is a finite sequence of states and joint actions, we can define  a partial translation from GDLZ ST-models to GDL ST-models based on the reached states and joint actions performed in a complete path. In other words, we can translate a run in a GDLZ model into a GDL model. Let us formally describe the translation.



Through the rest of this section, we fix the GDLZ ST-model
$M = (W, \bar{w}, T, $ $L, U, g, \pi_\Phi , \pi_{\mathbb{Z}})$ with a game signature $\mathcal{S} = (N, \mathcal{A}, X, \Phi )$ and the complete path $\delta = \bar{w}\stackrel{d_1}{\to} w_1 \stackrel{d_2}{\to}\cdot\cdot\cdot \stackrel{d_e}{\to} w_e$ in $M$.

Given the path $\delta$ in $M$, we next define a shortcut to refer to the smallest and biggest integers occurring in $\delta$ and the set of all actions performed in $\delta$. 


\begin{definition}
\label{def:minmax}
Given $M$ and $\delta$, 
we denote $\delta_{min}$ and $ \delta_{max}$ as the smallest  and biggest integer, respectively, occurring in any parameter list $\mathrm{z}$ from any action $a \in \{d_1, d_2, \cdot \cdot \cdot,  d_e\}$ and in any $\pi_\mathbb{Z}(w)$, for $w \in \{\bar{w}, w_1, \cdot \cdot \cdot, d_e\}$.
\end{definition}

\begin{definition}
\label{def:actioninpath}
Given $M$ and $\delta$, 
let $\mathcal{A}^\delta = \{d_j(r):  r \in N \text{ \& } 1 \leq j \leq e\}$ denote the set of all actions performed in $\delta$.
\end{definition}


Since we are aware of the path numerical range, we are able to construct a partial model translation. The translation is restricted to the states and actions involved in a given path. %

\begin{definition}
\label{def:modeltr}
Given a GDLZ ST-model $M$ and $\delta$,
we construct an associated GDL ST-model $M' = (W', \bar{w}, T', $ $L', U', g', \pi')$ with a game signature $\mathcal{S}' = (N, \mathcal{A}', \Phi')$. The components $\bar{w}$ and $N = \{r_1, \cdot \cdot \cdot , r_\mathsf{k}\}$ are the same for $M$ and $M'$. 

The propositional set $\Phi'$ is constructed over both $\Phi$ and $X$ 
as follows: $\Phi' = \{p, smaller(z_1, z_2), bigger(z_1, z_2), equal(z_1, z_2), succ(z_1, z_2), prec(z_1, z_2), x(q):p $ $\in \Phi, x \in X, \delta_{min} \leq q, z_1, z_2 \leq \delta_{max}\}$. The notation $x(q)$ represents the proposition ``variable $x$ has the value $q$".

For integrating the GDLZ comparison operators $<, >$ and $=$ in GDL, we need to define the order between the numerical terms in the translated model. 
Let $\pi_{z} \subset \Phi'$ denote a set of propositions  describing the numerical order, such as:
%
$\pi_z = \{succ(z, z+1), prec(z+1, z),  equal(z_1, z_1) : \delta_{min} \leq z < \delta_{max} 
\text{ \& } \delta_{min} \leq z_1 \leq \delta_{max}\} $ $\cup$ $ \{smaller(z_1, z_2): \delta_{min}\leq z_1, z_2 \leq \delta_{max} \text{ \& } z_1<z_2\} $ $\cup$ $ \{bigger(z_1, z_2): \delta_{min}\leq z_1, z_2 \leq \delta_{max} \text{ \& } z_1>z_2\}$.     


For any $a^r( z_1, \cdot \cdot \cdot, z_{l}) \in \mathcal{A}^\delta$, $a^r_{z_1, \cdot \cdot \cdot, z_{l}} \in \mathcal{A}'$.
We define an action translation 
$Tr^a: \mathcal{A}^\delta \rightarrow \mathcal{A}'$ associating every action in $\mathcal{A}^\delta$ with an action in $\mathcal{A}'$: 
\[Tr^a(a^r( z_1, \cdot \cdot \cdot, z_{l} )) = a^r_{z_1, \cdot \cdot \cdot, z_{l}}\] 
where $a^r( z_1, \cdot \cdot \cdot, z_{l}) \in \mathcal{A}^\delta$. 

The $M'$ components  $W',  T', L', U', g' $ and $\pi'$  are defined as follows:

\begin{itemize}
\item $W' = \{\bar{w}, w_1, \cdot \cdot \cdot, w_e\}$ 
\item $T' = \{w_e\}$;
\item $L' = \{(w_{j-1}, Tr^a(d_j(r)) : r \in N \text{ \& } 1 \leq j \leq e\}$;
\item $U'(w_{j-1}, ( Tr^a(d_j(r_1)), \cdot \cdot \cdot,  Tr^a(d_j(r_\mathsf{k})) )) = w_j$, for $1 \leq j \leq e\}$ ; 
\item $g'(r) = \{\{w_e \}\}$ if $ w_e \in g(r)$, otherwise $g'(r) = \emptyset$, for $r\in N$;
\item $\pi'(w) = \{ \pi_\Phi(w)\} \cup \{\pi_z\} \cup \{x(q): q\in\pi_\mathbb{Z}(w),
x \in X
\}$, for $w \in W'$.
\end{itemize}

We say that $M' = (W', \bar{w}', T', L', U', g', \pi')$ with the  
signature $\mathcal{S}' = (N, \mathcal{A}', $ $\Phi')$ 
is the ST-model translation of $M$ restricted over $\delta$ and write $Tr^m(M,  \delta)$.


\end{definition}

The path translation assigns each action appearing on it to the appropriated GDL action through $Tr^a$, i.e. the action translation.

\begin{definition}
\label{def:pathtr}
Given the agent set $N = \{r_1, \cdot \cdot \cdot, r_\mathsf{k}\}$, define a path translation $Tr^\lambda: \mathcal{\delta} \rightarrow \mathcal{\delta}'$ associating a path $\delta = \bar{w}\stackrel{d_1}{\to} w_1 \stackrel{d_2}{\to}\cdot\cdot\cdot \stackrel{d_e}{\to} w_e$ in $M$ with a path $\delta'$ in $Tr^m(M, \delta)$:
$
Tr^\lambda(\delta) = \bar{w}\stackrel{d_1'}{\to} w_1 \stackrel{d_2'}{\to}\cdot\cdot\cdot \stackrel{d_e'}{\to} w_e $,
where $d_i' = ( Tr^a(d_i(r_1)), \cdot \cdot \cdot, Tr^a(d_i(r_\mathsf{k})) )$,  for 
$1\leq i \leq e$.

\end{definition}

As shown next propositions, given a path in a GDLZ model, the translation of the GDLZ model is a GDL model. Moreover, the translation of a path in a GDLZ model is a path in the translation of the GDLZ model.  

\begin{proposition}
\label{prop:trmodel}
If $M$ is a GDLZ model and $\delta$ a complete path in $M$, then $Tr^m(M, \delta)$ is a GDL ST-model.
\end{proposition}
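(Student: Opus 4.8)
The plan is to verify, one component at a time, that $M' = Tr^m(M,\delta)$ satisfies every clause of Definition \ref{def:model} for GDL ST-models (which is Definition \ref{def:model} specialized to the case $X = \emptyset$, i.e.\ no numerical variables). Since $M'$ is defined explicitly from the finite path $\delta = \bar{w}\stackrel{d_1}{\to} w_1 \stackrel{d_2}{\to}\cdots \stackrel{d_e}{\to} w_e$, each check is a matter of unwinding the definitions in Definition \ref{def:modeltr}.

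First I would observe that the game signature $\mathcal{S}' = (N, \mathcal{A}', \Phi')$ is well-formed: $N$ is inherited from $M$, hence a nonempty finite set of agents; $\Phi'$ is finite because it is indexed by the finite sets $\Phi$, $X$ and by the finite integer range $[\delta_{min}, \delta_{max}]$ (finite since $\delta$ visits only finitely many states and joint actions, so only finitely many integers occur, by Definitions \ref{def:minmax} and \ref{def:actioninpath}); and $\mathcal{A}' = \bigcup_{r\in N} A'^r$ with each $A'^r$ nonempty, since along $\delta$ every agent performs an action at each of the $e \geq $ (at least one, if $\bar w \notin T$) stages --- here one should note that because $\delta$ is a \emph{complete} path, if $e = 0$ then $\bar w \in T$ and the signature degenerates; I would either argue $e \geq 1$ from $\bar w$ being reachable and the game being non-trivial, or simply remark that the construction still yields a legal (possibly one-state) model. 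Next, for the model components: $W' = \{\bar w, w_1, \dots, w_e\}$ is a nonempty finite set of states containing the initial state $\bar w$; $T' = \{w_e\} \subseteq W'$ is the terminal set; $L' \subseteq W' \times \mathcal{A}'$ holds since each pair $(w_{j-1}, Tr^a(d_j(r)))$ has first coordinate in $W'$ and second in $\mathcal{A}'$ by the definition of $Tr^a$ and $\mathcal{A}'$; the goal map $g': N \to 2^{W'}$ is well-typed because $g'(r)$ is either $\emptyset$ or $\{\{w_e\}\}$ --- wait, here one must be careful: in the GDL model of \cite{Zhang2015Representing} the goal function has type $N \to 2^W$, so $g'(r)$ should be a \emph{set of states}, not a set of sets of states; I would flag that the intended reading is $g'(r) = \{w_e\}$ if $w_e \in g(r)$ and $\emptyset$ otherwise, and check the definition text accordingly (this is the kind of typographical slip the proof must silently correct or the statement interpreted charitably).

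The substantive point is that $U'$ is a well-defined \emph{function} on its domain: I must check that $U'(w_{j-1}, (Tr^a(d_j(r_1)), \dots, Tr^a(d_j(r_{\mathsf k})))) = w_j$ does not assign two different values to the same argument. A collision would require $w_{j-1} = w_{j'-1}$ and $(Tr^a(d_j(r_i)))_i = (Tr^a(d_{j'}(r_i)))_i$ with $w_j \neq w_{j'}$ for some $j \neq j'$. This is where I expect the only real obstacle: it need \emph{not} hold in general, because a GDLZ path may revisit a state and, from the same joint action, the GDLZ update function $U$ being deterministic forces $w_j = w_{j'}$ --- so in fact there is no collision. The key step is therefore: since $U$ is a function and $w_j = U(w_{j-1}, d_j)$, $w_{j'} = U(w_{j'-1}, d_{j'})$ along $\delta$ (clause (iii) of Definition \ref{def:path}), equality of the preimages $w_{j-1}=w_{j'-1}$ together with $Tr^a$ being injective (distinct parameterized actions map to distinctly-subscripted GDL actions) and $d_j(r) \in \mathcal{A}$ gives $d_j = d_{j'}$, hence $w_j = w_{j'}$. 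So $U'$ is a partial function; whether one also needs it total on $W' \times D'$ depends on the GDL definition --- if totality is required, I would extend $U'$ by $U'(w, d) = w$ for all unlisted pairs, mirroring the "stutter on illegal joint actions" convention already used in Example \ref{exemplo}.

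Finally I would check $\pi'$: it has the right type $W' \to 2^{\Phi'}$ since $\pi'(w) = \pi_\Phi(w) \cup \pi_z \cup \{x(q) : q \in \pi_{\mathbb Z}(w), x\in X\}$, and each of these three pieces is a subset of $\Phi'$ by construction --- $\pi_\Phi(w) \subseteq \Phi \subseteq \Phi'$; $\pi_z \subseteq \Phi'$ because all its members are $succ/prec/equal/smaller/bigger$ atoms with arguments in $[\delta_{min},\delta_{max}]$; and the $x(q)$ atoms lie in $\Phi'$ because every $q \in \pi_{\mathbb Z}(w)$ for $w$ on $\delta$ satisfies $\delta_{min} \leq q \leq \delta_{max}$ by Definition \ref{def:minmax}. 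Assembling these observations yields that $M'$ satisfies all clauses of the GDL ST-model definition, which is exactly the claim. The proof is thus essentially bookkeeping; the one place demanding genuine (though short) argument is the well-definedness of $U'$, which I would present explicitly, and I would also take care to state which GDL-model convention (total vs.\ partial update, shape of the goal map) is being used so the verification is unambiguous.
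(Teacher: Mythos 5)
Your proof is correct and follows essentially the same route as the paper's: a component-by-component verification that $Tr^m(M,\delta)$ satisfies each clause of the GDL ST-model definition, relying on the finiteness of the path and the injectivity of $Tr^a$. You are in fact somewhat more careful than the paper's own proof, which silently treats $U'$ as well-defined (your collision argument via determinism of $U$ and injectivity of $Tr^a$ supplies the missing justification), reads $g'(r)$ as a set of states without comment, and does not discuss totality of $U'$ or the degenerate case $e=0$.
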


\begin{proof}
Given $M$ and $\delta$, 
let $Tr^m(M, \delta) = (W', \bar{w}, T', L',$ $ U', g', \pi')$, with $\mathcal{S'}= (N, \mathcal{A}', \Phi')$.
Since $\delta$ is a finite sequence of states and joint actions,  
we have that $W', \mathcal{A'}$ and $\Phi'$ are ensured to be finite sets. 
 
Since $Tr^a: \mathcal{A}^\delta\rightarrow \mathcal{A}'$ is an injective function, 
each $a \in \mathcal{A}^\delta$ will be assigned to a unique  $a' \in A'$.  By the path definition, we know that $d_j(r) \in L(w_{j-1})$, for every $r \in N,  1\leq j \leq e$. Then, it is easy to see that $L' \subseteq W' \times \mathcal{A}'$. By $Tr^m$ definition, we know that $\bar{w} \in W'$, $T' \subseteq W'$ and $g'(r) \subseteq \{\{w_e\}, \emptyset\}$, thereby $g'(r) \subseteq  2^{W'}$, for $r \in N$.
Furthermore, for every stage $1 \leq j \leq e$, we have $U'(w_{j-1},(Tr^a(d_j(r_1)), \cdot \cdot \cdot, Tr^a(d_j(r_\mathsf{k})) )) = U(w_{j-1}, (d_j(r_1), \cdot \cdot \cdot, d_j(r_\mathsf{k}) ))$, thus $U'(w_{j-1},(Tr^a(d_j(r_1)), \cdot \cdot \cdot, Tr^a(d_j(r_\mathsf{k})) ))  \in W'$.  Finally, since $Tr^m$ defines 
$\Phi' = \{p, smaller(z_1, z_2), bigger(z_1, z_2), $ $equal(z_1, z_2), $ $succ(z_1, z_2), prec(z_1, z_2),$ $ x(q): p \in \Phi, x \in X, \delta_{min} \leq q, z_1, z_2 \leq \delta_{max}\}$, 
then for every $w \in W'$, we have that $\pi'(w) \in \{ \pi_\Phi(w) \cup \pi_z \cup \{x(q): q\in\pi_\mathbb{Z}(w),
x \in X
\}\}$ and thus  $\pi'(w) \subseteq 2^{\Phi'}$.
Therefore, $Tr^m(M, \delta)$ is a GDL ST-model. 
\end{proof}

\begin{proposition}
\label{prop:trpath}
If $\delta$ is a path in a GDLZ model M then $Tr^\lambda(\delta)$ is a path in $Tr^m(M, \delta)$.
\end{proposition}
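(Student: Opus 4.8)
The plan is to verify directly that $Tr^\lambda(\delta)$ satisfies the three defining conditions of a path (Definition~\ref{def:path}) with respect to the GDL ST-model $Tr^m(M,\delta)$, which we already know to be a GDL ST-model by Proposition~\ref{prop:trmodel}. Write $\delta = \bar w \stackrel{d_1}{\to} w_1 \stackrel{d_2}{\to} \cdots \stackrel{d_e}{\to} w_e$ with $w_0 = \bar w$, so that by Definition~\ref{def:pathtr} we have $Tr^\lambda(\delta) = \bar w \stackrel{d_1'}{\to} w_1 \stackrel{d_2'}{\to} \cdots \stackrel{d_e'}{\to} w_e$ where $d_j' = (Tr^a(d_j(r_1)), \ldots, Tr^a(d_j(r_\mathsf{k})))$. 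First I would observe that this sequence is well-formed in $Tr^m(M,\delta)$: every state $w_0, \ldots, w_e$ lies in $W' = \{\bar w, w_1, \ldots, w_e\}$ by construction, and each $d_j'$ is a joint action of $Tr^m(M,\delta)$ since $Tr^a$ maps an action $a^r(z_1,\ldots,z_l)$ of agent $r$ to the action $a^r_{z_1,\ldots,z_l} \in \mathcal{A}'$ of the same agent (and $d_j(r) \in \mathcal{A}^\delta$, so $Tr^a(d_j(r))$ is defined).

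Next I would discharge conditions (ii) and (iii), which are immediate from Definition~\ref{def:modeltr}. For legality, $L' = \{(w_{j-1}, Tr^a(d_j(r))) : r \in N,\ 1 \le j \le e\}$, hence $d_j'(r) = Tr^a(d_j(r)) \in L'(w_{j-1})$ for every $r \in N$ and every $j$. For the transitions, $U'$ was defined precisely so that $U'(w_{j-1}, (Tr^a(d_j(r_1)), \ldots, Tr^a(d_j(r_\mathsf{k})))) = w_j$, i.e. $w_j = U'(w_{j-1}, d_j')$.

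The one step that needs a small argument is condition (i), namely $\{w_0, \ldots, w_{e-1}\} \cap T' = \emptyset$. Here I would use that $\delta$ is a \emph{complete} path: its last state satisfies $w_e \in T$, while the path condition applied to $\delta$ in $M$ gives $\{w_0, \ldots, w_{e-1}\} \cap T = \emptyset$, so $w_e \notin \{w_0, \ldots, w_{e-1}\}$. Since $Tr^m(M,\delta)$ sets $T' = \{w_e\}$, it follows that $\{w_0, \ldots, w_{e-1}\} \cap T' = \emptyset$ (vacuously when $e = 0$). Combining the three conditions, $Tr^\lambda(\delta)$ is a path in $Tr^m(M,\delta)$, and indeed a complete one since $w_e \in T'$.

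The argument is essentially definition-chasing and I do not expect a genuine obstacle. The only place where care is needed is condition (i): it is the completeness of $\delta$ — forcing $w_e \in T$ and therefore $w_e$ distinct from all earlier states, which avoid $T$ — that guarantees the unique terminal state of the translated model does not reappear earlier on the translated run.
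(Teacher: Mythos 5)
Your proof is correct and follows essentially the same route as the paper's: a direct check of the three conditions of Definition~\ref{def:path} against the construction of $L'$, $U'$ and $T'$ in Definition~\ref{def:modeltr}, concluding with the observation that completeness is preserved. If anything, your handling of condition (i) is slightly more explicit than the paper's, since you spell out why completeness of $\delta$ (i.e.\ $w_e \in T$ while earlier states avoid $T$) guarantees that the unique terminal state $w_e$ of $T'=\{w_e\}$ does not occur among $w_0,\ldots,w_{e-1}$, a point the paper asserts without comment.
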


\begin{proof}
Given $M$, $\delta$ and $Tr^m(M, \delta) = (W', \bar{w}, T', L',$ $ U', g', \pi')$ with $\mathcal{S}= (N, \mathcal{A}', $ $\Phi')$. 
Then $Tr^\lambda(\delta) = \bar{w}\stackrel{d_1'}{\to} w_1 \stackrel{d_2'}{\to}\cdot\cdot\cdot \stackrel{d_e'}{\to} w_e$. 
By the GDLZ path definition, for $e\geq 0$ and for any $j \in \{1, \cdot\cdot\cdot, e\}$, we have $\{w_0, \cdot\cdot\cdot, w_{e-1}\} \cap T = \emptyset$, where $w_0 = \bar{w}$. 
Since $T' = \{w_e\}$, we have  $\{w_0, \cdot\cdot\cdot, w_{e-1}\} \cap T' = \emptyset$.

For any $r \in N$, we have that $d_j(r) \in L(w_{j-1})$. Since the action translation $Tr^a$ assigns each action in $\mathcal{A}^\delta = \{d_j(r):  r \in N \text{ \& } 1 \leq j \leq e\}$ to an unique action in $\mathcal{A}'$, then the translation from the action of agent $r$ in the joint action $d_j$ will be in
the set of the translated legal actions in state $w_{j-1}$, i.e. $Tr^a(d_j(r)) \in L'(w_{j-1})$, where 
$L'(w_{j-1}) = \{Tr^a(a) \in \mathcal{A}^\delta \mid (w_{j-1},Tr^a(a)) \in L' \}$. 
Thus, $Tr^a(d_j(r)) \in L'(w_{j-1})$. 
Finally, since $\delta$ is path, $w_j = U(w_{j-1}, d_j) = U(w_{j-1}, (d_j(r_1), \cdot \cdot \cdot, d_j(r_\mathsf{k}) )$. Then, $w_j = U'(w_{j-1}, ( Tr^a(d_j(r_1)), \cdot \cdot \cdot, Tr^a(d_j(r_\mathsf{k}) ) )$,  that is, $w_j = U'(w_{j-1}, (a^{r_1}\text{}', \cdot \cdot \cdot, a^{r_\mathsf{k}}\text{}' )) = U'(w_{j-1}, d_j')$.

Thus, we have that $Tr^\lambda(\delta) = \bar{w}\stackrel{d_1'}{\to} w_1 \stackrel{d_2'}{\to}\cdot\cdot\cdot \stackrel{d_e'}{\to} w_e$ is a path in the GDL ST-model $Tr^m(M, \delta)$.
Furthermore, if $\delta$ a complete path in $M$, then $w_e \in T$ and $Tr^\lambda(\delta)$ is also a complete path in $Tr^m(M, \delta)$. 
\end{proof}

Next, we show how to translate GDLZ formulas to GDL. Likewise to the model translation, the translation is restricted to a path. 

\subsubsection{From GDLZ Paths and Formulas to GDL Formulas.}
\label{sec:DescriptionsTranslation}

Let us briefly recall GDL grammar. Given a GDL game signature $\mathcal{S}' = (N, \mathcal{A}', \Phi')$, a formula $\varphi' \in \mathcal{L}_{GDL}$ is defined by the following BNF:
\begin{align*}
\varphi' ::= p \mid initial \mid terminal \mid legal(a^r)\mid wins(r) \mid does(a ^r)  \mid \neg \varphi \mid \varphi \land \varphi \mid 
\bigcirc \varphi
\end{align*}
where $p \in \Phi'$, $r \in N$ and $a^r \in \mathcal{A}'$.



Given a path $\delta$ in a GDLZ ST-model $M$, we next define a translation for formulas in $\mathcal{L}_{GDLZ}$ to $\mathcal{L}_{GDL}$.
Each numerical term $z \in \mathcal{L}_z$ occurring in a formula $\varphi \in \mathcal{L}_{GDLZ}$ is translated by its semantic interpretation through function $v$ (see Definition \ref{def:funcionv}).  

 

\begin{definition}
\label{def:formulatr}
Given a GDLZ ST-model $M$ with $\mathcal{S} = (N, \mathcal{A}, X, \Phi )$, a path $\delta$ in $M$, a stage $j$ in $\delta$ and  function $v$ (see Definition \ref{def:funcionv}). 
A translation $Tr^\varphi$ from a formula $\varphi \in \mathcal{L}_{GDLZ}$ in a state $\delta[j]$ to a formula $\varphi' \in \mathcal{L}_{GDL}$ is defined as follows:
 
\begin{itemize}
\item $Tr^\varphi(\varphi, \delta[j]) = \varphi$ for all $\varphi \in \Phi \cup \{initial, terminal, wins(r)\}$;
\item $Tr^\varphi(\neg \varphi, \delta[j]) = \neg Tr^\varphi(\varphi, \delta[j])$;
\item $Tr^\varphi(\varphi_1 \land \varphi_2, \delta[j]) = Tr^\varphi(\varphi_1, \delta[j]) \land Tr^\varphi(\varphi_2 , \delta[j])$;
\item $Tr^\varphi(\bigcirc \varphi, \delta[j]) = \bigcirc Tr^\varphi(\varphi, \delta[j+1])$;
\item $Tr^\varphi(legal(a^r(\bar{z})), \delta[j]) = legal(Tr^a (a^r(v(z) : z \in \bar{z} )))$ iff $legal(a^r(v(z, \delta[j]): z \in \bar{z}) = \theta_r(\delta, j)$; otherwise $Tr^\varphi(legal(a^r(\bar{z})), \delta[j])= \neg legal(Tr^a (a^r(v(z) : z \in \bar{z} )))$;
\item $Tr^\varphi(does(a^r(\bar{z})), \delta[j]) = does(Tr^a(a^r(v(z) : z \in \bar{z} )))$;
\item 
$Tr^\varphi(\langle\bar{z}\rangle, \delta[j]) = \bigwedge_{i = 1}^{|\bar{z}|} x_i(v(q_i, \delta[j]))$;
\item $Tr^\varphi(z_1 < z_2, \delta[j]) = smaller(v(z_1, \delta[j]), v(z_2, \delta[j]))$;
\item $Tr^\varphi(z_1 > z_2, \delta[j]) =  bigger(v(z_1, \delta[j]), v(z_2, \delta[j]))$;
\item $Tr^\varphi(z_1 = z_2, \delta[j]) = equal(v(z_1, \delta[j]), v(z_2, \delta[Tr^\varphi(\varphi, \delta[t])]))$.
\end{itemize}

Where $r \in N$, $x_i \in X, q_i$ is the $i$-th value in $ \bar{z}$ and $ 0\leq i \leq |\bar{z}|$.


\end{definition}


Given a path in a GDLZ model, we show that the translation of a GDLZ formula is a GDL formula. Furthermore, if the GDLZ formula is valid at a stage in the path, its translation will be valid at the same stage in the translated path in the translated model.

\begin{proposition}
\label{prop:trformula}
Given a GDLZ ST-model $M$, a path $\delta$ in $M$, a stage $j$ in $\delta$ and function $v$, if $\varphi$ is a formula in $\mathcal{L}_{GDLZ}$  then $Tr^\varphi(\varphi, \delta[j])$ is a formula in $\mathcal{L}_{GDL}$.
\end{proposition}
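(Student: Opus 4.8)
The plan is to prove the statement by structural induction on the formula $\varphi \in \mathcal{L}_{GDLZ}$, following exactly the clauses of Definition~\ref{def:formulatr}. Throughout, I would use that $Tr^m(M,\delta)$ is a GDL ST-model (Proposition~\ref{prop:trmodel}) with game signature $\mathcal{S}' = (N, \mathcal{A}', \Phi')$, and in each case I would check that the output of $Tr^\varphi$ is generated by the GDL BNF over that signature, i.e.\ that every atom it contains lies in $\Phi'$ or is an action of $\mathcal{A}'$.

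For the base cases I would go through the atomic formulas one by one. If $\varphi \in \Phi \cup \{initial, terminal\}$ or $\varphi = wins(r)$, then $Tr^\varphi(\varphi, \delta[j]) = \varphi$, which is a GDL atom since $\Phi \subseteq \Phi'$ by the construction of $\Phi'$ in Definition~\ref{def:modeltr}. If $\varphi = legal(a^r(\bar z))$ or $\varphi = does(a^r(\bar z))$, the translation applies $legal$ or $does$ (in the $legal$ case possibly under a $\neg$) to $Tr^a(a^r(v(z,\delta[j]) : z \in \bar z))$; the point is that the argument is a member of $\mathcal{A}^\delta$ --- for $does$ because it equals $\theta_r(\delta,j)$, and for $legal$ because the positive literal is only produced when the argument equals $\theta_r(\delta,j)$ --- so $Tr^a$ is defined on it (Definition~\ref{def:actioninpath}, \ref{def:modeltr}) and returns an element of $\mathcal{A}'$, making the result a GDL atom or its negation. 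If $\varphi = \langle \bar z\rangle$, the translation is $\bigwedge_i x_i(v(q_i,\delta[j]))$; each conjunct lies in $\Phi'$ because the values $v(q_i,\delta[j])$ are the components of $\pi_\mathbb{Z}(\delta[j])$, hence in $[\delta_{min},\delta_{max}]$ by Definition~\ref{def:minmax}, and every $x(q)$ with $q$ in that range was placed in $\Phi'$. Finally, for $\varphi$ of the form $z_1 < z_2$, $z_1 > z_2$, or $z_1 = z_2$, the translation is $smaller$, $bigger$, or $equal$ applied to $v(z_1,\delta[j])$ and $v(z_2,\delta[j])$, which are propositions of $\Phi'$ provided these values lie in $[\delta_{min},\delta_{max}]$.

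The inductive step is immediate from the induction hypothesis and the closure rules of the GDL BNF: if $\varphi = \neg\psi$ then $Tr^\varphi(\varphi,\delta[j]) = \neg Tr^\varphi(\psi,\delta[j])$; if $\varphi = \psi_1 \land \psi_2$ then it is the conjunction of the two translations; and if $\varphi = \bigcirc \psi$ then $Tr^\varphi(\varphi,\delta[j]) = \bigcirc Tr^\varphi(\psi,\delta[j+1])$, so I would state the induction as ranging over all stages of $\delta$ simultaneously and apply the hypothesis to $\psi$ at stage $j+1$. In each case the result is obtained by applying a single GDL production to GDL formulas, hence is itself in $\mathcal{L}_{GDL}$.

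The step I expect to require genuine care --- rather than routine bookkeeping --- is the atomic case with numerical terms (the comparisons and $\langle\bar z\rangle$): one has to be sure that every integer appearing as an argument of $smaller$, $bigger$, $equal$, or as the index $q$ of some $x(q)$, actually falls within $[\delta_{min},\delta_{max}]$, so that the corresponding proposition was indeed included in $\Phi'$. This is exactly where Definition~\ref{def:minmax} is used, together with the fact that $Tr^\varphi$ is only ever invoked on (sub)formulas of the game description evaluated along $\delta$; all remaining cases are purely syntactic.
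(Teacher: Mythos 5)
Your overall route is the same as the paper's: a case analysis (structural induction) on the clauses of $Tr^\varphi$, checking that each output is generated by the GDL grammar over the signature $\mathcal{S}'=(N,\mathcal{A}',\Phi')$ of $Tr^m(M,\delta)$, with the propositional connectives and $\bigcirc$ handled by closure.

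However, several of the justifications you give in the atomic cases are not valid, because they smuggle in the assumption that $\varphi$ is \emph{true} at $\delta[j]$, which is not a hypothesis of the proposition (it quantifies over all $\varphi\in\mathcal{L}_{GDLZ}$). For $does(a^r(\bar z))$ you argue the argument of $Tr^a$ lies in $\mathcal{A}^\delta$ ``because it equals $\theta_r(\delta,j)$'' --- that holds only when $does(a^r(\bar z))$ is satisfied at $\delta[j]$; for an arbitrary (possibly false) $does$-atom the action need not occur in $\delta$ at all. Likewise, in the negative branch of the $legal$ clause $Tr^a$ is applied to the very same action, so the observation that the positive literal is only produced when the action equals $\theta_r(\delta,j)$ does not establish that $Tr^a$ is defined on it. The same problem affects your $\langle\bar z\rangle$ case: $v(q_i,\delta[j])$ coincides with the components of $\pi_{\mathbb{Z}}(\delta[j])$ only if $\langle\bar z\rangle$ holds at $\delta[j]$. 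Finally, the worry you single out as the delicate step (that all integers produced fall in $[\delta_{min},\delta_{max}]$ so the atoms were placed in $\Phi'$) is a genuine looseness, but your patch --- that $Tr^\varphi$ ``is only ever invoked on subformulas of the game description evaluated along $\delta$'' --- is an assumption not present in the statement, so it does not close the argument. The paper's own proof does not attempt any of this: it argues purely syntactically that each clause outputs $legal(a')$, $does(a')$, $smaller$/$bigger$/$equal$ atoms, $x(q)$ atoms, or Boolean/$\bigcirc$ combinations thereof, and hence a GDL formula; if you want to be more careful than that, the honest options are either to argue syntactically in the same way, or to state explicitly the extra membership/boundedness hypotheses you need --- not to derive them from the truth of $\varphi$ at $\delta[j]$.
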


\begin{proof}
Given a GDLZ model $M  = (W, \bar{w}, T, L, U,g,\pi_{\Phi}, \pi_{\mathbb{Z}})$, with a game signature $\mathcal{S} = (N, \mathcal{A}, \Phi, X)$, a path $\delta$ in $M$, a stage $j$ in $\delta$ and function $v$. 
Let $M' = Tr^m(M, \delta)$, with $\mathcal{S}' = (N, \mathcal{A}', \Phi')$.  
Assume that $\varphi \in \mathcal{L}_{GDLZ}$, 
we show that $Tr^\varphi(\varphi, \delta[j]) \in \mathcal{L}_{GDL}$ for each form of $\varphi$:

\begin{itemize}
\item If $\varphi$ is of the form $p, initial, terminal , wins(r),  \neg \varphi, \varphi \land \varphi$ or $ 
\bigcirc \varphi$, where $p \in \Phi$ and $r \in N$, then $Tr^\varphi(\varphi, \delta[j])$ assigns $\varphi$ to the exactly corresponding $\varphi' \in \mathcal{L}_{GDL}$. Thus, $Tr^\varphi(\varphi, \delta[j]) \in \mathcal{L}_{GDL}$.
\item If $\varphi$ is of the form $legal(a^r(\bar{z}))$ or $does(a^r(\bar{z}))$, where $r \in N$, then $Tr^\varphi(\varphi, \delta[j])$ $ =  legal(Tr^a(a^r(v(z) : z \in \bar{z} )))$ or $Tr^\varphi(\varphi, \delta[j]) =  does(Tr^a(a^r(v(z) : z \in \bar{z} )))$, respectively.  Since $Tr^a$is an injective function from $\mathcal{A}^\delta$ to $\mathcal{A}'$, we have that $Tr^a(a^r) = a^r\text{}'\in \mathcal{A}'$. 
		Therefore, $legal(a^r\text{}'), does(a^r\text{}') \in \mathcal{L}_{GDL}$ and $Tr^\varphi(\varphi, \delta[j]) $ $\in \mathcal{L}_{GDL}$.
\item If $\varphi$ is of the form $z_1 < z_2, z_1  >z_2, $ or $z_1 = z_2$, since $\{smaller(z_1, z_2), $\linebreak $ bigger(z_1, z_2)$, $equal(z_1, z_2), \delta_{min} \leq q, z_1, z_2 \leq \delta_{max}\} \in \Phi'$ and 
$\varphi_1 \land \varphi_2 \in \mathcal{L}_{GDL}$, we have that $\varphi \in \mathcal{L}_{GDL}$.
\item Finally, if $\varphi$ is of the form $\langle\bar{z}\rangle$, then $Tr^\varphi(\varphi, \delta[j]) = \bigwedge_{i = 1}^{|\bar{z}|} x_i(v(q_i, \delta[j]))$, 
where $x_i \in X$, $q_i$ is the $i$-th value of $\bar{z}$ and $0\leq i \leq |\bar{z}|$. We have that 
$\{x(q): x \in X, \delta_{min} \leq q \leq \delta_{max}\} \subseteq \Phi' $. Since that for each $p \in \Phi'$, $p \in \mathcal{L}_{GDL}$, we have that each $x_i(q_i) \in \mathcal{L}_{GDL}$.  Moreover, for any $\varphi_1$, $\varphi_2 \in \mathcal{L}_{GDL}$, we also have $\varphi_1 \land \varphi_2 \in \mathcal{L}_{GDL}$, then$((x_1(q_1) \land (x_2(q_2))\cdot \cdot \cdot ) \land x_{|\bar{z}|}(q_{|\bar{z}|}) = x_1(q_1) \land \cdot \cdot \cdot  \land x_{|\bar{z}|}(q_{|\bar{z}|}) = Tr^\varphi(\varphi, \delta[j])\in \mathcal{L}_{GDL}$.  
\end{itemize}
\end{proof}

\begin{theorem}
\label{prop:valid}
If $M, \delta, j \models\varphi$  then $Tr^m(M, \delta), Tr^\lambda(\delta), j \models Tr^\varphi(\varphi, \delta[j])$ . 
\end{theorem}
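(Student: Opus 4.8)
The plan is to prove the statement by structural induction on the formula $\varphi \in \mathcal{L}_{GDLZ}$, following the clauses of the semantics (Definition \ref{def:semantics}) and matching them against the clauses of the formula translation $Tr^\varphi$ (Definition \ref{def:formulatr}) and the model translation $Tr^m$ (Definition \ref{def:modeltr}). The induction is parameterized uniformly over all stages $j$ on $\delta$ at once, since the $\bigcirc$ case shifts the stage. Before starting the induction it is worth recording two facts that will be used repeatedly: first, by Proposition \ref{prop:trpath}, $Tr^\lambda(\delta)$ is a (complete) path in $Tr^m(M,\delta)$ visiting exactly the same underlying states $\bar w, w_1, \dots, w_e$ as $\delta$, so $\delta[j] = Tr^\lambda(\delta)[j]$ as states for every $j$; second, since the numerical terms occurring in $\varphi$ are evaluated by $v(\cdot,\delta[j])$ and, along $\delta$, all relevant integers lie in the range $[\delta_{min},\delta_{max}]$ (Definition \ref{def:minmax}), every proposition of the form $x(v(z,\delta[j]))$, $smaller(\cdot,\cdot)$, $bigger(\cdot,\cdot)$, $equal(\cdot,\cdot)$ invoked by $Tr^\varphi$ actually belongs to $\Phi'$.

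For the base cases: the clauses $p$, $initial$, $terminal$, $wins(r)$ are immediate because $Tr^\varphi$ leaves them unchanged, $\delta[j] = Tr^\lambda(\delta)[j]$, and $\pi'(w) \supseteq \pi_\Phi(w)$, $\bar w' = \bar w$, $T' = \{w_e\}$ (and $w_e \in T$ since $\delta$ is complete), $g'(r) \ni \{w_e\}$ iff $w_e \in g(r)$ — so each side-condition transfers directly. For $does(a^r(\bar z))$: $M,\delta,j \models does(a^r(\bar z))$ means $\theta_r(\delta,j) = a^r(v(z,\delta[j]):z\in\bar z)$; by definition of $Tr^\lambda$ the $r$-component of the joint action at stage $j$ in $Tr^\lambda(\delta)$ is $Tr^a$ of that action, which is exactly the action named by $Tr^\varphi(does(a^r(\bar z)),\delta[j])$, so it holds in the translated path. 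For $legal(a^r(\bar z))$: here $Tr^\varphi$ is defined by cases — it produces the positive literal precisely when $a^r(v(z,\delta[j]):z\in\bar z)$ is the action actually taken by $r$ at stage $j$; by Proposition \ref{prop:doeslegal} that action is legal in $M$ at $\delta[j]$, hence by the definition of $L'$ its translation is in $L'(w_{j-1}) = L'(Tr^\lambda(\delta)[j])$; in the complementary case $Tr^\varphi$ produces the negation and we argue $M,\delta,j\models\neg legal(\dots)$ cannot be assumed — actually we only need the implication in one direction, so if the hypothesis $M,\delta,j\models legal(a^r(\bar z))$ holds we must still be in the positive branch... this asymmetry is the delicate point and is flagged below. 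The arithmetic/comparison clauses $z_1 < z_2$, $z_1 > z_2$, $z_1 = z_2$ and $\langle\bar z\rangle$ reduce to checking that $\pi_z \subseteq \pi'(w)$ contains $smaller(v(z_1,\delta[j]),v(z_2,\delta[j]))$ exactly when $v(z_1,\delta[j]) < v(z_2,\delta[j])$ (and similarly for $bigger$, $equal$), and that $x_i(v(q_i,\delta[j])) \in \pi'(\delta[j])$ exactly when the $i$-th value of $\pi_\mathbb{Z}(\delta[j])$ equals $v(q_i,\delta[j])$, which is how $\pi'$ and $\pi_z$ were built.

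For the inductive step: $\neg\varphi$, $\varphi_1\land\varphi_2$, and $\bigcirc\varphi$ follow by applying the induction hypothesis to the immediate subformula(s); in the $\bigcirc$ case one uses that $Tr^\varphi(\bigcirc\varphi,\delta[j]) = \bigcirc Tr^\varphi(\varphi,\delta[j+1])$ and that $|Tr^\lambda(\delta)| = |\delta|$, so the "if $j < |\delta|$" guard matches on both sides and the hypothesis is invoked at stage $j+1$.

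The main obstacle is the $legal$ clause together with the overall direction of the biconditional-versus-implication. The semantics clauses for atomic formulas are genuine biconditionals, and the clean way to run the induction is to prove the strengthened statement "$M,\delta,j\models\varphi$ iff $Tr^m(M,\delta),Tr^\lambda(\delta),j\models Tr^\varphi(\varphi,\delta[j])$" for every $j$ — but this fails for $legal$, because $Tr^\varphi$ of $legal(a^r(\bar z))$ depends on whether that action was \emph{taken}, not on whether it was \emph{legal}, so the two are not equivalent in general (a state may have several legal actions while only one is played). The resolution is to observe that in the only case we care about — when $M,\delta,j\models legal(a^r(\bar z))$ is assumed and we need the translated formula to hold — we must handle both branches of $Tr^\varphi$: if $a^r(v(z):z\in\bar z) = \theta_r(\delta,j)$ then the positive literal $legal(Tr^a(\dots))$ holds by $L'$'s definition; if not, then $Tr^\varphi$ yields $\neg legal(Tr^a(\dots))$, and we need $Tr^a(a^r(v(z):z\in\bar z)) \notin L'(w_{j-1})$ — but $L'(w_{j-1})$ contains \emph{only} the translations of the actions actually played at stage $j$ (one per agent), so indeed that action's translation is not in $L'(w_{j-1})$ unless it coincides with the played action, giving the result. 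Thus the proof goes through as a one-directional implication proved by induction, with the $legal$ case requiring this extra case-split on the played action and the fact that $L'$ is "played-action-minimal" rather than recording all originally-legal actions; the $\neg$ case then only needs the converse of the non-$legal$ clauses, which do hold as biconditionals, so the induction should be organized to carry the biconditional on all connectives and atoms except $legal$, and the one-directional statement on $legal$ and on anything built above a $legal$ subformula via $\neg$ — or, more simply, restrict attention to the implication throughout and absorb the $\neg$ case by noting $Tr^\varphi$ commutes with $\neg$ and the sub-statement for the stripped subformula is itself a biconditional whenever no $legal$ occurs, while if a $legal$ does occur under the negation one re-examines that the relevant equivalence still holds because $L'$ faithfully records exactly the played actions. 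I would present it as: prove the biconditional for all formulas \emph{not containing} $legal$, then separately handle $legal$ literals and their boolean combinations, and finally assemble the general implication.
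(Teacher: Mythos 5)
Your overall strategy is the same as the paper's: a case analysis / structural induction on $\varphi$ that matches each clause of $Tr^\varphi$ (Definition~\ref{def:formulatr}) against the semantics, using Proposition~\ref{prop:trpath} to identify the states of $\delta$ and $Tr^\lambda(\delta)$, the set $\pi_z$ for the comparison atoms, and the $x(q)$ propositions for $\langle\bar z\rangle$. In fact you are more careful than the paper, whose proof disposes of the $\neg$, $legal$, $does$ and $\bigcirc$ cases with a single ``it is easy to see'' sentence; you correctly isolate the one genuinely delicate point, namely that $Tr^\varphi(legal(a^r(\bar z)),\delta[j])$ is defined by whether the action was \emph{played}, not whether it was \emph{legal}, so the clean biconditional induction breaks there.

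However, your proposed repair of the $\neg$-over-$legal$ case does not work, and this is a genuine gap. To push the implication through $\neg\psi$ you need the converse direction for $\psi$, and for $\psi=legal(a^r(\bar z))$ that converse is exactly what fails: if the action $a^r(v(z,\delta[j]):z\in\bar z)$ is neither legal at $\delta[j]$ nor played at stage $j$ (but occurs elsewhere in $\delta$, so $Tr^a$ is defined on it), then $Tr^\varphi(legal(a^r(\bar z)),\delta[j])=\neg legal(Tr^a(\cdots))$, and since $L'$ records only played actions this negation \emph{holds} in $Tr^m(M,\delta)$ while $legal(a^r(\bar z))$ fails in $M$. Your closing claim that ``the relevant equivalence still holds because $L'$ faithfully records exactly the played actions'' is therefore backwards: it is precisely because $L'$ is played-action-minimal that the equivalence fails, and consequently the theorem's implication itself fails for formulas such as $\neg legal(a^r(\bar z))$ in this configuration (e.g.\ in the Nim path of Figure~\ref{fig:flow-nim}, $\neg legal(reduce^{Player_1}(1,5))$ at the second state translates to $\neg\neg legal(reduce^{Player_1}_{1,5})$, which is false in the translated model). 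To be fair, this defect is inherited from the paper's own Definition~\ref{def:formulatr} and is silently skipped in the paper's proof, so your attempt founders exactly where the paper is unsound rather than through a misreading on your part; but as written, your argument asserts an equivalence that is false, and the ``one-directional statement on anything built above a $legal$ subformula via $\neg$'' cannot be assembled into the required conclusion without either restricting the class of formulas (e.g.\ excluding $legal$ under negation) or changing the translation of $legal$. A secondary, smaller caveat: your claim that $L'(w_{j-1})$ contains only the actions played \emph{at stage $j$} also needs the tacit assumption that states do not repeat along $\delta$, since $L'$ is indexed by states, not by stages.
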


\begin{proof}
Given a GDLZ model $M = (W, \bar{w}, T, L, U, g, \pi_\Phi , \pi_{\mathbb{Z}})$, with the game signature $\mathcal{S} = (N, \mathcal{A}, X, \Phi )$, a complete path $\delta$, a stage $j$ on $\delta$, a formula $\varphi \in \mathcal{L}_{GDLZ}$ and the function $v$.
Let $M' = (W', \bar{w}, T', L',$ $ U', g', \pi')$, with $\mathcal{S'}= (N, \mathcal{A}', \Phi')$, be the GDL translation of $M$, i.e.  $M' = Tr^m(M, \delta)$,  $\delta' = Tr^\lambda(\delta)$ 
and  $\delta_{min}$, $ \delta_{max} \in \mathbb{Z}$ denote the integer bounds in $\delta$.

For any integers $\delta_{min} \leq z_1, z_2 < \delta_{max}$, $\pi_z \subseteq \pi'(\delta[j])$ enumerates its predecessor and successor and define all the cases were $bigger(z_1, z_2)$, $smaller(z_1, z_2)$ and $equal(z_1, z_2)$ are true. 
Let $\varphi' = Tr^\varphi(\varphi, \delta[j])$. 
We assume that $M, \delta, j \models \varphi$ and show that then we have $M', \delta', j \models\varphi'$ for every $\varphi$.

\begin{itemize}
\item  If $\varphi $ is on the form $ p \in \Phi$, we have $Tr^\varphi(p, \delta[j]) = p$. By $\mathcal{L}_{GDLZ}$ semantics, we know that $p \in \pi_\Phi (\delta [j])$. In the ST-model translation, we have the valuation function constructed such that $\pi'(\delta[j]) = \{ \pi_\Phi(\delta[j])\} \cup \{\pi_z\}\cup \{x(q): q\in\pi_\mathbb{Z}(\delta[j]), x \in X \}$
. Then, $p \in \pi'(\delta[j]')$ and $M', \delta', j \models p$;
\item If $\varphi$ is either on the form $\neg \psi$, $\varphi_1 \land \varphi_2$, $initial$, $terminal$, $wins(r)$, $legal(a^r(\bar{z}))$, $does(a^r(\bar{z}))$, or $\bigcirc \psi$, since $Tr^a$ and $Tr^\varphi$ assigns each GDLZ action and formula to an unique GDL state, action and formula, respectively, due to both languages semantics it is easy to see that $M', \delta', j \models Tr^\varphi(\varphi, \delta[j])$, whenever $M, \delta, j \models \varphi$;
\item  If $\varphi$ is on the form $z_1 > z_2$, we have $Tr^\varphi(z_1 > z_2, \delta[j]) =  bigger(v(z_1, \delta[j]), $ $v(z_2, \delta[j]))$. By $\mathcal{L}_{GDLZ}$ semantics, we know that $  v(z_1, \delta[j]) > v(z_2, \delta[j]) $, i.e. $v(z_1, \delta[j])$ is bigger then $v(z_2,\delta[j])$, then $bigger(v(z_1, \delta[j]), v(z_2, \delta[j])) \in \pi_z$. 
$\pi_z\subseteq \pi'(\delta[j])$ defines $ bigger(v(z_1, \delta[j]), $ $ v(z_2, \delta[j]))$ such that it is true, iff $v(z_1, \delta[j]) > v(z_2, \delta[j])$. 
Thus, $M', \delta', j \models bigger(v(z_1, \delta[j]), v(z_2, \delta[j])) $;
\item If $\varphi$ is either on the form $z_1 < z_2$ or $z_1 = z_2$, the proof proceeds as in the previous case;
\item If $\varphi$ is on the form $\langle\bar{z}\rangle $, $Tr^\varphi(\bar{z}, \delta[j]) =\bigwedge_{i = 1}^{|\bar{z}|} x_i(v(q_i, w))$, where $x_i \in X$ and $q_i$ is the $i$-th value of $\bar{z}$. By $\mathcal{L}_{GDLZ}$ semantics, we know that  $\bar{z} = \pi_\mathbb{Z}(\delta[j]) $. Since, by the ST-model translation each $x_i(q_i) \in \Phi'$ and $\pi'(\delta[j]) = \{ \pi_\Phi(\delta[j])\} \cup \{\pi_z\}\cup \{x(q): q\in\pi_\mathbb{Z}(\delta[j]), x \in X \}$, we have that $M', \delta', j \models x_1(q_1)$, $M', \delta', j \models x_2(q_2)$ and so on, thus $M', \delta', j \models \bigwedge_{i = 1}^{|\bar{z}|} x_i(v(q_i, \delta[t]))$. 
\end{itemize}
\end{proof}

Because it is a partial translation based on a path, the legal actions are restricted to the ones performed in the path. To overcome this issue, in the next section we show how to define complete translations over GDLZ models and formulas. The following complete translation is limited to the finite GDLZ models.

\subsection{From Finite GDLZ Model to GDL Model}
\label{sec:ModelTranslation}


Let us consider the case where the GDLZ ST-model has finite components. In this case, we are able to define a complete model translation, instead of partial based on a path. In other words, all possible runs over the finite GDLZ ST-model can be translated. 
Next, we characterize a finite GDLZ ST-model. 

\begin{definition}
\label{def:finitemodeltr}
Given two arbitrary bounds $z_{min} \leq  z_{max} \in \mathbb{Z}$,
a finite GDLZ ST-model $M_f = (W_f, \bar{w}_f, T_f, L_f, U_f, g_f, \pi_{\Phi f} , \pi_{\mathbb{Z} f})$, with the game signature $\mathcal{S}_f = (N_f, \mathcal{A}_f, X_f, $ $\Phi_f )$ is a subset of GDLZ ST-models that have the following aspects: (i) $z_{min} \leq z_i \leq z_{max} $, for any $ a^r( z_1, \cdot\cdot\cdot, z_{l}) \in \mathcal{A}_f   $, $1\leq i \leq o$ and $ r \in N_f$; (ii) $W_f$ and $\mathcal{A}_f$ are finite sets; and (iii) $z_{min} \leq q_i \leq z_{max} $, for any $\langle q_1 \cdot \cdot \cdot q_n \rangle = \pi_\mathbb{Z}(w)$, $1 \leq i\leq n$ and $w \in W_f$.
\end{definition}

Through the rest of this section, we fix the bounds $z_{min}$ and $z_{max}$ as well as the finite GDLZ ST-model $M_f = (W_f, \bar{w}_f, T_f, L_f, U_f, g_f, \pi_{\Phi f} , \pi_{\mathbb{Z} f})$ with a game signature $\mathcal{S}_f = (N_f, \mathcal{A}_f, X_f, \Phi_f )$ and $N_f = \{r_1, \cdot \cdot \cdot, r_\mathsf{k}\}$. Let us show how any finite GDLZ ST-model can be translated into a GDL ST-model.

\begin{definition}
Given the finite GDLZ ST-model $M_f$ and its signature $S_f$, 
we define the 
GDL ST-model $M_f' = (W_f, \bar{w}_f, T_f, $ $L_f', U_f', g_f, \pi_f')$ with a game signature $\mathcal{S}'_f = (N_f, \mathcal{A}'_f, \Phi'_f)$. The components $W_f, \bar{w}_f, T_f, g_f$ and $N_f$ are the same for $M_f$ and $M'_f$.

We construct $\Phi_f'$ over both $\Phi_f$, $X_f$ and its values. Although $X_f$ is a finite set, each one of its components has an integer value in each state $w \in W_f$.  As $\Phi_f'$ is finite, we construct it with the bounds $z_{min}$ and $ z_{max} \in \mathbb{Z}$. 
Since $\mathbb{Z}$ is a countable set, for any $z_{min}$ and $z_{max}$, we can define a finite enumeration of integer values.

The set of atomic propositions is defined as follows: 
$\Phi_f' = \{p, smaller(z_1, z_2), $ $ bigger(z_1, z_2),  equal(z_1, z_2), succ(z_1, z_2), prec(z_1, z_2), x(q): p \in \Phi, x \in X_f, z_{min} $ $ \leq q, z_1, z_2 \leq z_{max}\}$.

We define an action translation $Tr^a_f: \mathcal{A}_f \rightarrow \mathcal{A}_f'$ associating every action in $\mathcal{A}_f$ with an action in $\mathcal{A}'_f$ as follows:
\[Tr^a_f(a^r( z_1, \cdot \cdot \cdot, z_{l} )) = a^r_{z_1, \cdot \cdot \cdot, z_{l}}\]  
where $a^r( z_1, \cdot \cdot \cdot, z_{l}) \in \mathcal{A}_f, z_{min} \leq z_i \leq z_{max} $ and $ 0\leq i\leq l\}$.

Note that $Tr^a$ is an injective function.
Thereby, we can define the GDL components $\mathcal{A}_f'$ and $ L_f'$ based on $Tr^a$, as follows: (i) $\mathcal{A}_f' = \{Tr^a_f(a^r( z_1, \cdot \cdot \cdot, z_{l})): a^r( z_1, \cdot \cdot \cdot, z_{l} ) \in \mathcal{A}_f\}$; and (ii) $L_f' = \{(w, Tr^a_f(a)): (w, a) \in L_f\}$.

For each $w \in W_f$, each $r \in N_f$ and each joint action $( a^{r_1}, \cdot \cdot \cdot, a^{r_\mathsf{k}}) \in \prod_{r\in N_f} A^r_f$, where $A^r_f \in \mathcal{A}_f$, the update function is defined as:  
$
U_f'(w, ( Tr^a_f(a^{r_1}),$ $ \cdot \cdot \cdot, Tr^a_f(a^{r_\mathsf{k}}) )) = U_f(w, ( a^{r_1}, \cdot \cdot \cdot, a^{r_\mathsf{k}} ))
$. 

Let $\pi_{zf} \subset \Phi_f$ denote a set of propositions  describing the numerical order, such that $\pi_{zf} = \{succ(z, z+1), prec(z+1, z),  equal(z_1, z_1) : z_{min} \leq z < z_{max} \text{ \& } z_{min} \leq z_1 \leq z_{max}\} \cup \{smaller(z_1, z_2): z_{min}\leq z_1, z_2 \leq z_{max} \text{ \& } z_1<z_2\} \cup \{bigger(z_1, z_2): z_{min}\leq z_1, z_2 \leq z_{max} \text{ \& } z_1>z_2\}$.  

Finally, for all $w \in W_f$, we construct the valuation $\pi_f'$ as follows:
$\pi_f'(w) = \{ \pi_{\Phi f}(w) \cup \pi_{zf} \cup \{x(q): q\in\pi_{\mathbb{Z}f}(w),
x \in X_f
\}\}$.

We say that $M_f'$ is 
a bounded ST-model translation of $M_f$ and write $Tr^m_f(M_f)$.
\end{definition}

The path translation consists at assigning each action appearing on it to the appropriated GDL action through $Tr^a_f$.

\begin{definition}
\label{def:finitepathtr}
Define a path translation $Tr^\lambda_f: \mathcal{\delta}_f \rightarrow \mathcal{\delta}_f'$ associating every path $\delta_f = \bar{w}_f\stackrel{d_1}{\to} w_1 \stackrel{d_2}{\to}\cdot\cdot\cdot \stackrel{d_e}{\to} w_e$ in $M_f$ with a path $\delta_f'$ in $M_f'$:
$Tr^\lambda_f(\delta_f) = \bar{w}_f\stackrel{d_1'}{\to} w_1 \stackrel{d_2'}{\to}\cdot\cdot\cdot \stackrel{d_e'}{\to} w_e$, 
where  $d_i = ( a^{r_1}, \cdot \cdot \cdot, a^{r_\mathsf{k}} ) \in D_f$, $D_f = \prod_{r\in N_f} A^r_f$, $A^r_f \in \mathcal{A}_f$, $w_i \in W_f$, $d_i' = ( Tr^a_f(a^{r_1}), \cdot \cdot \cdot, Tr^a_f(a^{r_\mathsf{k}}) )$ and $1\leq i \leq e$.
\end{definition}

It follows that the translations of a finite GDLZ model and a path in a finite GDLZ model are a model and a path in GDL, respectively. 

\begin{proposition}
\label{prop:trfinite}
If $M_f$ is a finite GDLZ model then $Tr^m_f(M_f)$ is a GDL 
model.
\end{proposition}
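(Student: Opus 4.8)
The plan is to follow the proof of Proposition~\ref{prop:trmodel} line by line, now checking directly on $M_f$ instead of on a fixed path: one must verify that every component of $M_f' = Tr^m_f(M_f)$ satisfies the requirements in the definition of a GDL ST-model, that is, Definition~\ref{def:model} with the signature restricted to a purely propositional vocabulary and the numerical machinery dropped. Several components need nothing: $W_f$, $\bar w_f$, $T_f$, $g_f$ and $N_f$ are copied verbatim from $M_f$, which is a GDLZ ST-model by hypothesis, so $\bar w_f \in W_f$, $T_f \subseteq W_f$ and $g_f(r)\in 2^{W_f}$ for every $r \in N_f$ hold immediately, and $W_f$ is finite by clause (ii) of Definition~\ref{def:finitemodeltr}.

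Next I would argue that the signature $\mathcal{S}'_f = (N_f, \mathcal{A}'_f, \Phi'_f)$ is a legitimate, finite, GDL signature. The set $\mathcal{A}'_f$ is the image of the finite set $\mathcal{A}_f$ (clause (ii)) under $Tr^a_f$, hence finite, and since $Tr^a_f$ preserves the agent superscript it respects the partition $\mathcal{A}'_f = \bigcup_{r\in N_f}(A^r_f)'$ with each part nonempty; the injectivity of $Tr^a_f$, already noted in its definition, is what makes the subsequent clauses well defined. For $\Phi'_f$ the key observation is that, although each $x\in X_f$ ranges over integers, clauses (i) and (iii) of Definition~\ref{def:finitemodeltr} confine \emph{every} integer that can occur, whether in an action parameter list or in some $\pi_{\mathbb Z f}(w)$, to the fixed finite window $[z_{min},z_{max}]\subseteq\mathbb Z$; therefore the families of propositions $smaller(z_1,z_2)$, $bigger(z_1,z_2)$, $equal(z_1,z_2)$, $succ(z_1,z_2)$, $prec(z_1,z_2)$ and $x(q)$ are each finite, so $\Phi'_f$ is a finite set of atoms as required.

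With finiteness in hand the relational components are routine bookkeeping. Because $L_f\subseteq W_f\times\mathcal{A}_f$ and $L_f' = \{(w,Tr^a_f(a)) : (w,a)\in L_f\}$, we obtain $L_f'\subseteq W_f\times\mathcal{A}'_f$. For the valuation one notes $\pi_{zf}\subseteq\Phi'_f$ by construction, and that for any $w$ every $x(q)$ with $q\in\pi_{\mathbb Z f}(w)$ lies in $\Phi'_f$ precisely by clause (iii), so $\pi_f'(w)\subseteq\Phi'_f$ for all $w\in W_f$. The only point deserving a sentence is totality of the update function: since $Tr^a_f$ is injective, every joint action of $M_f'$ is uniquely of the form $(Tr^a_f(a^{r_1}),\dots,Tr^a_f(a^{r_\mathsf k}))$ for a joint action $(a^{r_1},\dots,a^{r_\mathsf k})$ of $M_f$, so the defining equation $U_f'(w,(Tr^a_f(a^{r_1}),\dots,Tr^a_f(a^{r_\mathsf k}))) = U_f(w,(a^{r_1},\dots,a^{r_\mathsf k}))$ determines a total function into $W_f$, using that $U_f$ is total into $W_f$.

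The main obstacle, and the only step that is not a pure transcription of Proposition~\ref{prop:trmodel}, is the finiteness of $\Phi'_f$: one has to make explicit that the bounds $z_{min},z_{max}$ together with clauses (i) and (iii) of Definition~\ref{def:finitemodeltr} are exactly what stops the value-propositions $x(q)$ and the comparison-propositions from being infinite in number, which is what lets the translated signature remain well formed even though, unlike in Section~\ref{sec:ModelOverPathTranslation}, no single path is fixed. Everything else goes through verbatim from the proof of Proposition~\ref{prop:trmodel}, with ``bounded by $\delta$'' replaced by ``bounded by the fixed window $[z_{min},z_{max}]$''.
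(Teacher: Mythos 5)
Your proof is correct and follows essentially the same route as the paper's: the paper likewise notes that $W_f$, $\mathcal{A}_f'$ and $\Phi_f'$ are finite thanks to the bounds $z_{min},z_{max}$ and the construction of $Tr^m_f$, invokes the injectivity of $Tr^a_f$, and then defers the remaining component checks to the argument of Proposition~\ref{prop:trmodel}. Your write-up simply makes explicit (finiteness of the proposition families, totality of $U_f'$, $L_f'\subseteq W_f\times\mathcal{A}_f'$, $\pi_f'(w)\subseteq\Phi_f'$) what the paper leaves as "proceeds in a similar way."
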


\begin{proof}
Given $M_f$, $\mathcal{S}_f$, $Tr^m_f(M_f) = (W_f, \bar{w}_f, T_f, L_f',$ $ U_f', g_f, \pi_f')$, with $\mathcal{S}_f'= (N_f, $ $\mathcal{A}_f', \Phi_f')$, the integer bounds $z_{min}$ and $z_{max}$ and the construction of $Tr^m_f$, we have that  both the $W_f, \mathcal{A}_f'$ and $\Phi_f$ are ensured to be finite.
Since $Tr^a_f$ is an injective funcion, the proof proceeds in a similar way to the proof for Proposition \ref{prop:trmodel}.
\end{proof}

\begin{proposition}
\label{prop:finitepath}
If $\delta_f$ is a path in a finite GDLZ model $M_f$ then $Tr^\lambda_f(\delta_f)$ is a path in $Tr^m_f(M_f)$.
\end{proposition}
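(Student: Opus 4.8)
The plan is to verify directly that $Tr^\lambda_f(\delta_f)$ satisfies the three defining conditions of a path (Definition~\ref{def:path}) in the model $Tr^m_f(M_f)$, reusing essentially the argument of Proposition~\ref{prop:trpath}. I would write $\delta_f = \bar{w}_f \stackrel{d_1}{\to} w_1 \stackrel{d_2}{\to} \cdots \stackrel{d_e}{\to} w_e$ with each $d_i = (a^{r_1}, \ldots, a^{r_\mathsf{k}}) \in D_f$, so that by Definition~\ref{def:finitepathtr} we have $Tr^\lambda_f(\delta_f) = \bar{w}_f \stackrel{d_1'}{\to} w_1 \stackrel{d_2'}{\to} \cdots \stackrel{d_e'}{\to} w_e$ with $d_i' = (Tr^a_f(a^{r_1}), \ldots, Tr^a_f(a^{r_\mathsf{k}}))$. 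The key observation is that the sequence of states is left untouched by the translation: only the joint-action labels are renamed, and $Tr^m_f$ keeps $W_f, \bar{w}_f$ and $T_f$ unchanged. The case $e = 0$ is immediate, so one assumes $e \geq 1$.

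Then I would check the three conditions in turn. For condition~(i), since $T_f$ is the same in $M_f$ and $M_f'$ and $\delta_f$ is a path in $M_f$, one has $\{w_0, \ldots, w_{e-1}\} \cap T_f = \emptyset$ with $w_0 = \bar{w}_f$, which is exactly what is required in $M_f'$. For condition~(ii), fixing $r \in N_f$ and $1 \le j \le e$, the path property of $\delta_f$ gives $d_j(r) \in L_f(w_{j-1})$, and since $L_f' = \{(w, Tr^a_f(a)) : (w,a) \in L_f\}$ one gets $(w_{j-1}, Tr^a_f(d_j(r))) \in L_f'$, i.e. $d_j'(r) = Tr^a_f(d_j(r)) \in L_f'(w_{j-1})$. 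For condition~(iii), the path property of $\delta_f$ gives $w_j = U_f(w_{j-1}, d_j) = U_f(w_{j-1}, (d_j(r_1), \ldots, d_j(r_\mathsf{k})))$, and the defining identity $U_f'(w, (Tr^a_f(a^{r_1}), \ldots, Tr^a_f(a^{r_\mathsf{k}}))) = U_f(w, (a^{r_1}, \ldots, a^{r_\mathsf{k}}))$ then yields $w_j = U_f'(w_{j-1}, d_j')$. This establishes that $Tr^\lambda_f(\delta_f)$ is a path in $Tr^m_f(M_f)$, and a final line noting that $w_e \in T_f$ implies the translated path is complete would handle the complete case.

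The argument is routine, and the only point that I expect to require attention is the injectivity of $Tr^a_f$: it is what guarantees that $L_f'$ and $U_f'$ are well defined and that no two distinct moves of $M_f$ get identified, so that legality and transitions transfer faithfully. Injectivity holds because distinct parameter tuples $z_1, \ldots, z_l$ produce distinct subscripted symbols $a^r_{z_1, \ldots, z_l}$, so I anticipate no real obstacle. Unlike the path-restricted translation of Section~\ref{sec:ModelOverPathTranslation}, here there is no need to worry that an action encountered along $\delta_f$ lies outside the translated signature: finiteness of $M_f$ together with the bounds $z_{min} \le z_{max}$, chosen to contain every integer occurring in $\mathcal{A}_f$ and in the valuations $\pi_{\mathbb{Z}f}$, ensures that all relevant actions, propositions and numerical orderings are already present in $\mathcal{A}_f'$ and $\Phi_f'$.
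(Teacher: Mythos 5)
Your proof is correct and follows essentially the same route as the paper: the paper's own proof simply notes that, given $Tr^a_f$ and $Tr^\lambda_f$, the argument proceeds as in Proposition~\ref{prop:trpath}, which is exactly the condition-by-condition verification (terminal-state disjointness, legality via $L_f'$, and transitions via $U_f'$) that you spell out. Your explicit treatment, including the remark that $W_f$, $\bar{w}_f$, $T_f$ are unchanged and that injectivity of $Tr^a_f$ keeps legality and updates well defined, is just a more detailed rendering of the same argument.
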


\begin{proof}
Given $Tr^a_f$ and $Tr^\lambda_f$, the proof proceeds as the proof for Proposition \ref{prop:trpath}.
\end{proof}

Next, we show a complete translate from GDLZ formulas to GDL formulas. Likewise to the
model translation, we use arbitrary bounds to restrict the numerical range in the formulas. 

\subsubsection{From bounded GDLZ Formulas to GDL Formulas. }
\label{sec:BoundedDescriptionsTranslation}


Assuming a GDLZ game signature $\mathcal{S}_f = (N_f, \mathcal{A}_f, \Phi_f, X_f)$, the semantics of a numerical variable $x \in X_f$  in a $\mathcal{L}_{GDLZ}$ formula is evaluated depending on the current game state. 

To translate the meaning of a numerical variable $x \in X_f$ occurring in an atomic formula $\varphi \in \mathcal{L}_{GDLZ}$ in the form $legal(a^r(\bar{z}))$, $does(a^r(\bar{z}))$, $\langle\bar{z}\rangle$, $z_1 < z_2$, $z_1 > z_2$ or $z_1 = z_2$, Algorithm \ref{alg:remove}, denoted $removeVar(\varphi)$, defines an intermediate formula $\varphi_x$ as the disjunction from all possible values $z_{min} \leq q \leq z_{max}$ for $x$ in $\varphi$ and $x(q)$. 
Algorithm \ref{alg:remove} stops when there is no more occurrence of numerical variables in the resulting formula.
%
 
\begin{algorithm}
\caption{$removeVar(\varphi)$}\label{alg:remove}
\hspace*{\algorithmicindent} \textbf{Input:} 
a  formula $\varphi \in  \mathcal{L}_{GDLZ}$. Assume the variable set $X_f$ and  $z_{min} \leq z_{max}$.
 \\
 \hspace*{\algorithmicindent} \textbf{Output:} a partially translated formula.
\begin{algorithmic}[1]

\State  $I \gets  \{z_{min}, \cdot \cdot \cdot, z_{max}\}$

\If {($\varphi = \text{``}legal(a^r(z_1, \cdot \cdot \cdot, z_m))\text{''}$)}
\ForEach {$z_i \in (z_1, \cdot \cdot \cdot, z_m)$} 
\If {$z_i \in  X_f$}
\Return 
$\bigvee_{q_i \in I} (removeVar(legal(a^r(z_1, \cdot \cdot \cdot, q_i, \cdot \cdot \cdot, z_m)) \land z_i(q_i))$
\EndIf
\EndFor

\ElsIf {($\varphi = \text{``}does(a^r(z_1, \cdot \cdot \cdot, z_m))\text{''}$)}
Proceeds as the previous case.

\ElsIf {($\varphi = \text{``}\langle z_1, \cdot \cdot \cdot, z_m\rangle\text{''}$)}
\ForEach {$z_i \in \langle z_1, \cdot \cdot \cdot, z_m\rangle $} 
\If {$z_i \in  X_f$}
\Return $\bigvee_{q_i \in I} (removeVar(\langle z_1, \cdot \cdot \cdot, q_i, \cdot \cdot \cdot, z_m\rangle) \land z_i(q_i))$
\EndIf
\EndFor

\ElsIf {($\varphi = \text{``}z_1 < z_2\text{''}$)}
\If {$z_1 \in  X_f$}
\Return $\bigvee_{q_1 \in I} (removeVar(q_1 < z_2) \land z_1(q_1))$
\EndIf

\If {$z_2 \in  X_f$}
\Return $ \bigvee_{q_2 \in I} (removeVar(z_1 < q_2) \land z_2(q_2))$
\EndIf


\ElsIf  {($\varphi = \text{``}z_1 > z_2\text{''} \text{ or } \varphi = \text{``}z_1 = z_2\text{''}$)}
Proceeds as the previous case.
\EndIf

\Return $\varphi$
\end{algorithmic}
\end{algorithm}

A numerical simple term $z_f$ is defined by $\mathcal{L}_{z_f}$, which is generated by the following BNF: 
\[z_f ::= z' \mid add(z_f, z_f)\mid sub(z_f,z_f) \mid min(z_f,z_f)\mid max(z_f,z_f)\] where $z' \in \mathbb{Z}$.  Note that $\mathcal{L}_{z_f} \subseteq \mathcal{L}_z$. Each numerical term $z_f \in \mathcal{L}_{z_f}$ occurring in a formula $\varphi \in \mathcal{L}_{GDLZ}$ is translated by its semantic interpretation through function $v_f$, defined in a similar way to Definition \ref{def:funcionv}:

\begin{definition}
\label{def:funcionvs}
Let us define function $v_f: \mathcal{L}_{z_f} \rightarrow \mathbb{Z}$, associating any $z_f \in \mathcal{L}_{z_f}$ to a number in $\mathbb{Z}$: 

\[v_f( z_f) = \begin{dcases}  z_i & \text{if }  z_f \in \mathbb{Z}
\\v_f(z_f') + v_f(z_f'') &\text{if } z_f = add(z_f', z_f'')
\\v_f(z_f') - v_f(z_f'') &\text{if } z_f = sub(z_f', z_f'')
\\minimum(v_f(z_f'),v_f(z_f'')) &\text{if } z_f = min(z_f', z_f'')
\\maximum(v_f(z_f'),v_f(z_f'')) &\text{if } z_f = min(z_f', z_f'') 
\end{dcases}\]
\end{definition}




The complete formula translation is restricted to bounded formulas, which are are defined as follows:

\begin{definition}
$\varphi \in \mathcal{L}_{GDLZ}$ is a bounded formula if, for any numerical term $z_f$ occurring in $\varphi$, we have $z_f \in \mathcal{L}_{z_f}$ and  $z_{min} \leq v_f(z) \leq z_{max}$  or if there is no occurrence of numerical terms in $\varphi$.
\end{definition}

We next define a translation map for
bounded formulas in $\mathcal{L}_{GDLZ}$ to formulas in $\mathcal{L}_{GDL}$. Each numerical simple term $z_f \in \mathcal{L}_{z_f}$ occurring in a formula
$\varphi \in \mathcal{L}_{GDLZ}$ is translated by its semantic interpretation through function $v_f$ (see
Definition \ref{def:funcionvs}).

\begin{definition}
\label{def:finiteformtr}
Given the GDLZ game signature $\mathcal{S}_f = (N_f, \mathcal{A}_f, X_f, \Phi_f )$ and function $v_f$, 
a translation $Tr^\varphi_f$ from a bounded formula $\varphi \in \mathcal{L}_{GDLZ}$  to a formula $\varphi' \in \mathcal{L}_{GDL}$ is defined as $Tr^\varphi_f = Tr^z_f(removeVar(\varphi))$, where $Tr^z_f$ is specified as follows:


\begin{itemize}
\item $Tr^z_f(\varphi) = \varphi$ for all $\varphi \in \Phi_f \cup \{initial, terminal, wins(r)\}$ $\cup $ $\{x(q) : x \in X_f, z_{min} \leq q \leq z_{max}\}$;
\item $Tr^z_f(\neg \varphi) = \neg Tr^z_f(removeVar(\varphi)))$;
\item $Tr^z_f(\varphi_1 \land \varphi_2) = Tr^z_f(removeVar(\varphi_1))) \land Tr^z_f(removeVar(\varphi_2)))$;
\item $Tr^z_f(\bigcirc \varphi) = \bigcirc Tr^z_f(removeVar(\varphi)))$;
\item $Tr^z_f(legal(a^r(\bar{z}))) = legal(Tr^a_f(a^r(v_f(z)  : z \in \bar{z} )))$;
\item $Tr^z_f(does(a^r(\bar{z}))) = does(Tr^a_f(a^r(v_f(z) : z \in \bar{z} )))$;
\item $Tr^z_f(\langle\bar{z}\rangle) = \bigwedge_{i = 1}^{|\bar{z}|} x_i(v_f(q_i))$;
\item $Tr^z_f(z_1 < z_2,) = smaller(v_f(z_1), v_f(z_2))$;
\item $Tr^z_f(z_1 > z_2) =  bigger(v_f(z_1), v_f(z_2))$;
\item $Tr^z_f(z_1 = z_2) = equal(v_f(z_1), v_f(z_2))$.
\end{itemize}

Where $r \in N_f$, $x_i \in X_f, q_i$ is the $i$-th value in $ \bar{z}$ and $ 0\leq i \leq |\bar{z}|$.
\end{definition}

Let us illustrate the translation of GDLZ formulas into GDL using $Tr^\varphi_f$.
\begin{example}

Let $I = \{z_{min}, \cdot \cdot \cdot, z_{max}\}$ and $\varphi_1 = does(reduce^r(heap_1, add(1,2)))$, where $heap_1 \in X_f$, then $Tr^\varphi_f(\varphi_1) = \bigvee_{h_1 \in  \{z_{min}, \cdot \cdot \cdot, z_{max}\}} (does(reduce^r(h_1, 3)) \land heap_1(h_1) )$.
 

 
\end{example}

The translation of a GDLZ formula is a GDL formula. Furthermore, if the GDLZ formula is valid at a stage in the path in a finite GDLZ model, then its translation will be valid at the same stage in the translated path in the translated model.


\begin{proposition}
\label{prop:finiteformula}
If $\varphi$ $\in \mathcal{L}_{GDLZ}$  then $Tr^\varphi_f(\varphi)$ $\in \mathcal{L}_{GDL}$.
\end{proposition}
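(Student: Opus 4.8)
The plan is to follow the same pattern as the proof of Proposition~\ref{prop:trformula}, but with an extra layer accounting for the preprocessing performed by $removeVar$. By Definition~\ref{def:finiteformtr} we have $Tr^\varphi_f(\varphi) = Tr^z_f(removeVar(\varphi))$ (and $\varphi$ is tacitly assumed to be a bounded formula, since $Tr^\varphi_f$ is only defined on those), so it suffices to establish two facts: (a) $removeVar(\varphi)$ terminates on every input and returns a formula in which no numerical variable of $X_f$ occurs inside any atomic subformula, and in which every remaining numerical term belongs to $\mathcal{L}_{z_f}$; and (b) $Tr^z_f$ maps every such ``variable-free'' formula into $\mathcal{L}_{GDL}$.

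For (a), I would prove termination with a measure that strictly decreases on each recursive call that does not hit the final \texttt{return}: namely the total number of occurrences of elements of $X_f$ among the term positions of the atomic subformulas of the argument. Each recursive call replaces one such occurrence $z_i \in X_f$ by a constant $q_i$ inside the atom and adds a fresh conjunct $z_i(q_i)$; since $z_i(q_i)$ is a proposition of $\Phi'_f$ of a new syntactic shape, no numerical variable ``occurs'' in it in the sense of Section~\ref{sec:GDLZ-syntax}, so the conjunct contributes $0$ to the measure. As $X_f$ is finite and each atom has finitely many term positions, this measure is a natural number and the recursion bottoms out. When it does, no case guard of Algorithm~\ref{alg:remove} fires, which by inspection means no numerical variable occurs in any atomic subformula; and the only terms left untouched are those already present in $\varphi$ that are not variables, so by the bounded-formula hypothesis each is a closed term of $\mathcal{L}_{z_f}$ with $v_f$-value in $[z_{min}, z_{max}]$.

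For (b), I would argue by structural induction on the output $\psi = removeVar(\varphi)$, following the clauses of $Tr^z_f$. The base cases ($p$, $initial$, $terminal$, $wins(r)$, and $x(q)$) are immediate since $\Phi_f \cup \{x(q): x \in X_f,\ z_{min}\le q\le z_{max}\} \subseteq \Phi'_f$. The cases $\neg\psi'$, $\psi_1 \land \psi_2$, $\bigcirc\psi'$ follow from the induction hypothesis and the closure of the GDL grammar under $\neg$, $\land$, $\bigcirc$; this also covers the disjunctions introduced by $removeVar$, because $\lor$ is a defined connective of $\mathcal{L}_{GDL}$. For $legal(a^r(\bar z))$ and $does(a^r(\bar z))$, every $z \in \bar z$ is now a closed term of $\mathcal{L}_{z_f}$, so $v_f(z)$ is a well-defined integer in range, $a^r(v_f(z): z \in \bar z) \in \mathcal{A}_f$, and since $Tr^a_f$ is injective its image lies in $\mathcal{A}'_f$, whence the translated $legal(\cdot)$/$does(\cdot)$ atoms are in $\mathcal{L}_{GDL}$. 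For $z_1 < z_2$, $z_1 > z_2$, $z_1 = z_2$ the translations are $smaller(v_f(z_1),v_f(z_2))$, $bigger(v_f(z_1),v_f(z_2))$, $equal(v_f(z_1),v_f(z_2))$, and since $z_{min}\le v_f(z_i)\le z_{max}$ these are propositions of $\Phi'_f$. Finally $Tr^z_f(\langle\bar z\rangle) = \bigwedge_{i=1}^{|\bar z|} x_i(v_f(q_i))$ is a finite conjunction of propositions $x_i(v_f(q_i)) \in \Phi'_f$, hence in $\mathcal{L}_{GDL}$.

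The main obstacle is part (a): one must be careful that $removeVar$, as written in Algorithm~\ref{alg:remove}, returns on the \emph{first} variable occurrence it detects in an atom and delegates the rest to a recursive call on the rewritten atom, so the termination argument cannot be local to a single atom — the decreasing measure must count variable occurrences over the whole formula, and one must verify that the conjuncts $z_i(q_i)$ added along the way genuinely do not reintroduce any. Once (a) is pinned down, (b) is essentially the same bookkeeping as in the proof of Proposition~\ref{prop:trformula}.
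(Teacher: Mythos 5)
Your proposal takes essentially the same route as the paper's proof: use $removeVar$ to eliminate numerical variables from atoms (relying on the fact that the propositions $x(q)$ belong to $\Phi_f'$), and then run the same clause-by-clause case analysis for $Tr^z_f$ as in the proof of Proposition~\ref{prop:trformula}. The paper's argument is simply terser — it does not spell out termination of $removeVar$ or the role of the boundedness assumption — so your part (a) is a refinement of, not a departure from, its proof.
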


\begin{proof}
Given a finite GDLZ model $M_f$ with the game sinature $\mathcal{S} = (N, \mathcal{A}, \Phi, X)$ and $M_f' = Tr^m_f(M_f)$, with $\mathcal{S}_f' = (N_f, \mathcal{A}_f', \Phi_f')$.
Assume that $\varphi \in \mathcal{L}_{GDLZ}$, since $Tr^\varphi_f(\varphi) = Tr^z_f (removeVar(\varphi))$, we need to show that  $Tr^z_f(removeVar(\varphi)) \in \mathcal{L}_{GDL}$ for each form of $\varphi$.
If there is a numerical variable $x$ in an atomic formula $\varphi$, the method $removeVar(\varphi)$ constructs $\varphi_x$ as a disjunction from $\varphi$ with every possible value of $x$ between $z_{min}$ and $z_{max}$  and the proposition $x(q)$.
By $Tr^m_f$ definition, $\{x(q): x \in X, z_{min} \leq q \leq z_{max}\} \subseteq \Phi' $ and thereby $x(q) \in \mathcal{L}_{GDL}$. The translation $Tr^\varphi_f$ proceeds assigning each subformula of $removeVar(\varphi)$ to a $\mathcal{L}_{GDL}$ formula.
The proof proceeds as the proof for Proposition \ref{prop:trformula}.
\end{proof}

\begin{theorem}
\label{prop:finitevalid}
If  $M_f$ is a finite GDLZ ST-model,  $\varphi \in \mathcal{L}_{GDLZ}$ is a bounded formula and $M_f, \delta_f, j \models\varphi$  then $Tr^m_f(M_f), $ $Tr^\lambda_f(\delta_f), j \models Tr^\varphi_f(\varphi)$. 
\end{theorem}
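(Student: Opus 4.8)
The plan is to prove the statement by structural induction on $\varphi$, exploiting the fact that $Tr^\varphi_f$ factors as $Tr^z_f \circ removeVar$: I would first analyse what $removeVar$ does to a (possibly variable-containing) formula, and then run essentially the same case analysis as in the proof of Theorem~\ref{prop:valid} on the variable-free output, with $v_f$ playing the role of $v(\cdot,\delta[j])$. The well-formedness prerequisites are already available: by Proposition~\ref{prop:trfinite}, $Tr^m_f(M_f)$ is a GDL ST-model; by Proposition~\ref{prop:finitepath}, $Tr^\lambda_f(\delta_f)$ is a path in it (and complete whenever $\delta_f$ is, which is implicit in $M_f,\delta_f,j\models\varphi$); and by Proposition~\ref{prop:finiteformula}, $Tr^\varphi_f(\varphi)\in\mathcal{L}_{GDL}$.

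The first step is a property of $removeVar$: it terminates, its output contains no numerical variable (only numerical simple terms of $\mathcal{L}_{z_f}$), and $Tr^m_f(M_f), Tr^\lambda_f(\delta_f), j \models removeVar(\varphi)$ iff $M_f,\delta_f,j\models\varphi$. Termination is clear since each recursive call strictly decreases the number of numerical-variable occurrences. For truth preservation, take an atomic $\varphi$ in which a variable $x_i\in X_f$ occurs (say $legal(a^r(z_1,\dots,z_m))$); since $M_f$ is finite, the $i$-th component $q^*$ of $\pi_{\mathbb{Z}f}(\delta_f[j])$ lies in $[z_{min},z_{max}]$, so by construction of $\pi_f'$ the atom $x_i(q^*)$ is in $\pi_f'(\delta_f[j])$ and hence true at stage $j$; moreover replacing $x_i$ by $q^*$ leaves the truth value of $\varphi$ unchanged, because $v$ at $\delta_f[j]$ and $v_f$ agree on the resulting variable-free term. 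Thus the $q^*$-disjunct of $\bigvee_{q_i\in\{z_{min},\dots,z_{max}\}}(removeVar(\dots q_i\dots)\land x_i(q_i))$ is witnessed, and conversely any witnessed disjunct pins $x_i$ to the actual state value; iterating over the finitely many variable occurrences yields the claim. One has to check here that the substituted constants stay in $[z_{min},z_{max}]$, so that boundedness of the simple-term parts is preserved along the recursion.

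With that in hand, it remains to show $Tr^m_f(M_f), Tr^\lambda_f(\delta_f), j \models Tr^z_f(\psi)$ whenever $M_f,\delta_f,j\models\psi$ for a variable-free $\psi$, by induction on $\psi$. The atomic cases $p$, $initial$, $terminal$, $wins(r)$ are immediate from the fact that $\pi_f'(w)$ contains $\pi_{\Phi f}(w)$ and that $\bar w_f, T_f, g_f$ are shared by $M_f$ and $M_f'$; the Boolean and $\bigcirc$ cases are the induction hypothesis; $legal$ and $does$ use that $Tr^a_f$ is injective and that $L_f'$ and $U_f'$ are obtained by transporting $L_f$ and $U_f$ along $Tr^a_f$; the comparison cases ($z_1<z_2$, $z_1>z_2$, $z_1=z_2$) follow from $\pi_{zf}\subseteq\pi_f'(w)$, which enumerates every true $smaller$, $bigger$, $equal$ atom over $[z_{min},z_{max}]$ --- and it is precisely boundedness of $\varphi$ that guarantees $v_f(z_1),v_f(z_2)\in[z_{min},z_{max}]$, so that the relevant atom indeed lies in $\Phi_f'$; and $\langle\bar z\rangle$ unfolds to a conjunction of atoms $x_i(v_f(q_i))$, each in $\pi_f'(\delta_f[j])$ since the tuple of values $v_f(q_i)$ equals $\pi_{\mathbb{Z}f}(\delta_f[j])$. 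This mirrors the proof of Theorem~\ref{prop:valid} except for the bookkeeping change $v(\cdot,\delta[j])\mapsto v_f(\cdot)$, which is legitimate because $v$ and $v_f$ coincide on variable-free terms.

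The main obstacle is the step about $removeVar$ rather than the inductive step: one must argue that $removeVar$ eliminates \emph{every} variable occurrence no matter the order of descent, control boundedness through the cascade of substitutions, and --- most importantly --- verify the semantics of the introduced disjunctions, namely that although the disjunction ranges over all of $[z_{min},z_{max}]$ only the single disjunct singled out by the actual valuation $\pi_{\mathbb{Z}f}(\delta_f[j])$ is (and needs to be) satisfied. Once that is settled, everything else is a routine adaptation of Theorem~\ref{prop:valid}.
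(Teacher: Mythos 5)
Your proposal is correct and follows essentially the same route as the paper's proof: reduce everything except the variable-occurrence cases to the case analysis of Theorem~\ref{prop:valid} (with $v_f$ replacing $v(\cdot,\delta[j])$ on variable-free terms), and handle occurrences of numerical variables via the $removeVar$ disjunction, observing that $x_i(q_i)\in\pi_f'(\delta_f[j])$ holds exactly when $q_i$ is the actual $i$-th value of $\pi_{\mathbb{Z}f}(\delta_f[j])$, so only the disjunct pinned by the true valuation is satisfied, and iterating over the finitely many variable occurrences. Your packaging of this as a separate truth-preservation lemma for $removeVar$ followed by an induction for $Tr^z_f$ is just a modular restatement of the paper's argument, not a different method.
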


\begin{proof}
Given a finite GDLZ ST-model $M_f$ $= (W_f, \bar{w}_f, T_f, L_f, U_f, g_f, \pi_{\Phi f}, \pi_{\mathbb{Z} f})$,
with the game signature $\mathcal{S} = (N, \mathcal{A},$ $ X, \Phi )$, 
a complete path $\delta_f$, a stage $j$ on $\delta_f$, a formula $\varphi \in \mathcal{L}_{GDLZ}$ and the function $v_f$. Let $Tr^m_f(M_f) = M_f' = (W_f, \bar{w}_f, T_f, L_f',$ $ U_f', g_f, \pi_f')$, 
$\delta_f' = Tr^\lambda_f(\delta)$, $\varphi' = Tr^\varphi_f(\varphi)$ and  $z_{min} \leq z_{max} \in \mathbb{Z}$ denote the integer bounds in $Tr^m$. 


The proof is performed in a similar way that in the proof for Theorem \ref{prop:valid}, except in the case where there are numerical variables occurring in $\varphi$. Lets consider the case where $\varphi$ is in the form $legal(a^r(z_1, \cdot\cdot \cdot, z_m))$ and we have only one numerical variable $z_i \in X_f$ occurring in the parameter list $(z_1, \cdot \cdot \cdot, z_m)$,  where $1 \leq i \leq m$. 
By  $Tr^\varphi_f$ and Algorithm \ref{alg:remove} definition,  $Tr^\varphi_f(\varphi) = Tr^z_f(removeVar(\varphi)) = \bigvee_{q_i \in \{z_{min}, \cdot \cdot \cdot, z_{max}\}} (legal(Tr^a_f(a^r(z_1, \cdot\cdot \cdot, q_i\cdot\cdot \cdot,  z_m))) \land z_i(q_i))$. 
For any $w \in W_f'$, $z_i \in X_f$ and $z_{min} \leq q_i' \leq z_{max} $, we have that $z_i(q_i') \in \pi_f'(w)$ iff $q_i'$ is the $i$-th value of $\pi_{\mathbb{Z}f}(w)$, i.e., variable $z_i$ has the value $q_i'$ in state $w$. 
Thereby, $(legal(Tr^a_f(a^r(z_1, \cdot\cdot \cdot, q_i\cdot\cdot \cdot,  z_m))) \land z_i(q_i))$ will hold only in the case where $q_i = q_i'$. Thus, $Tr^m_f(M_f), Tr^\lambda_f(\delta_f), j\models \bigvee_{q_i \in \{z_{min}, \cdot \cdot \cdot, z_{max}\}} (legal(Tr^a_f(a^r(z_1, \cdot\cdot \cdot, q_i\cdot\cdot \cdot,  z_m))) \land z_i(q_i))$ iff $M_f, \delta_f, j \models legal(a^r(z_1, \cdot \cdot \cdot, z_m))$.
Since $removeVar(legal(a^r(z_1, \cdot \cdot \cdot, z_m)))$ will be recursively applied to every $z_i \in X_f$ occurring in $(z_1, \cdot \cdot \cdot, z_m)$, it is easy to see that the result holds when we have two or more numerical variables in $legal(a^r(z_1, \cdot \cdot \cdot, z_m))$. The proof proceeds in a similar way if $\varphi$ is either in the form $does(a^r(z_1, \cdot \cdot \cdot, z_m))$, $\langle z_1, \cdot \cdot \cdot, z_m\rangle$, $z_1 < z_2$, $z_1 > z_2$ or $z_1 = z_2$.
\end{proof}

In the next section, we briefly describe how to translate GDL ST-models into GDLZ ST-models. Besides that, we show that GDL is a sublanguage of GDLZ.

\subsection{From GDL to GDLZ}
\label{sec:GDLTranslation}

Conversely, we show that any GDL ST-model can be transformed into a GDLZ ST-model. Given a GDL ST-model $M' = (W, \bar{w}, T,L, U, g, \pi')$ with a game signature $\mathcal{S}' = (N, \mathcal{A}, \Phi)$, we define an associated GDLZ ST-model $M = (W, \bar{w}, $\linebreak$ T, L, U, g, \pi_\Phi , \pi_{\mathbb{Z}})$ with the game signature $\mathcal{S} = (N, \mathcal{A}, X, \Phi )$, such that all elements are the same, except by $\pi_\Phi, \pi_{\mathbb{Z}}$ and X 
and $X $. 
These GDLZ components are defined as follows:
(i) $\pi_\Phi(w) = \pi'(w)$; (ii) $\pi_{\mathbb{Z}}(w) = \emptyset$; and (iii) $X = \emptyset$.



It follows that any formula $\varphi \in \mathcal{L}_{GDL}$ is also a formula in GDLZ, i.e. $\varphi \in \mathcal{L}_{GDLZ}$.

\begin{proposition}
\label{prop:contain}
If $\mathcal{S}' = (N, \mathcal{A}, \Phi')$ and $\mathcal{S} = (N, \mathcal{A}, X, \Phi )$ are GDL and GDLZ game signatures, respectively, 
and $\Phi' \subseteq \Phi$, then $\mathcal{L}_{GDL} \subseteq \mathcal{L}_{GDLZ}$.
\end{proposition}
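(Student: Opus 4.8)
The statement to prove is that if $\Phi' \subseteq \Phi$ then $\mathcal{L}_{GDL} \subseteq \mathcal{L}_{GDLZ}$, i.e., every well-formed formula of the GDL language is also a well-formed formula of the GDLZ language. The natural approach is a structural induction on the formation of a GDL formula $\varphi' \in \mathcal{L}_{GDL}$, using the BNF for $\mathcal{L}_{GDL}$ recalled just before Definition~\ref{def:formulatr} and the BNF for $\mathcal{L}_{GDLZ}$ given in Section~\ref{sec:GDLZ-syntax}. The plan is to match each production rule of the GDL grammar with a production rule (or a special case of one) of the GDLZ grammar.

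\textbf{Base cases.} First I would handle the atomic GDL formulas. For $p \in \Phi'$: since $\Phi' \subseteq \Phi$, we have $p \in \Phi$, and $p$ is an atomic formula of $\mathcal{L}_{GDLZ}$ by its first production. The constants $initial$ and $terminal$ and the formula $wins(r)$ for $r \in N$ appear verbatim in both grammars, so these are immediate. For $legal(a^r)$ and $does(a^r)$ with $a^r \in \mathcal{A}$: the GDLZ grammar writes these as $legal(a^r(\bar z))$ and $does(a^r(\bar z))$ where $\bar z$ is a number list; taking $\bar z = \varepsilon$ (the empty word is explicitly allowed in the production for $\bar z$) recovers exactly the GDL forms $legal(a^r)$ and $does(a^r)$. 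Here I would note that the two signatures share the same agent set $N$ and action set $\mathcal{A}$, so the actions mentioned in a GDL formula are legitimate GDLZ actions.

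\textbf{Inductive cases.} For the compound formulas, assume $\varphi', \psi' \in \mathcal{L}_{GDL}$ are already known (inductive hypothesis) to lie in $\mathcal{L}_{GDLZ}$. The productions $\neg\varphi'$, $\varphi' \land \psi'$, and $\bigcirc\varphi'$ occur identically in the GDLZ grammar, so each of these is again a formula of $\mathcal{L}_{GDLZ}$. This exhausts all productions of the GDL grammar, so by induction every $\varphi' \in \mathcal{L}_{GDL}$ is in $\mathcal{L}_{GDLZ}$, which is the desired inclusion.

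\textbf{Main obstacle.} This is essentially a bookkeeping argument, so there is no deep difficulty; the only point requiring care is the treatment of the action atoms $legal(a^r)$ and $does(a^r)$, where one must observe that the parametrised GDLZ syntax $legal(a^r(\bar z))$ degenerates to the GDL form precisely when $\bar z = \varepsilon$, and that this is permitted by the number-list grammar. A secondary thing worth stating explicitly is that the hypothesis $\Phi' \subseteq \Phi$ is exactly what is needed to inject the propositional atoms, and that the remaining GDLZ-specific atoms ($z_1>z_2$, $z_1<z_2$, $z_1=z_2$, $\langle\bar z\rangle$) are simply never required. Hence the inclusion is strict in general but the containment always holds.
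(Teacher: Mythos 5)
Your proof is correct and follows essentially the same route as the paper's: a case analysis on the GDL grammar, using $\Phi' \subseteq \Phi$ for propositional atoms and the fact that the number list $\bar z$ may be the empty word $\varepsilon$ so that $legal(a^r)$ and $does(a^r)$ are the degenerate GDLZ forms $legal(a^r(\varepsilon))$ and $does(a^r(\varepsilon))$. Your explicit structural induction on the compound cases is just a slightly more careful phrasing of what the paper states directly.
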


\begin{proof}
Assume the GDL and GDLZ 
signatures  $\mathcal{S}' = (N, \mathcal{A}, \Phi')$ and $\mathcal{S} = (N, \mathcal{A}, $ $X, \Phi )$, respectively, and $\Phi' \subseteq \Phi$, we show that for any $\varphi \in \mathcal{L}_{GDL}$, $\varphi \in \mathcal{L}_{GDLZ}$.

Assume $\varphi \in \mathcal{L}_{GDL}$, if $\varphi $ is of the form $p, initial, terminal , wins(r),  \neg \varphi, \varphi \land \varphi$ or $ 
\bigcirc \varphi$, where $p \in \Phi'$ and $r \in N$, by the grammar definition of GDLZ, since $\Phi' \subseteq \Phi$, we can easily see that  $\varphi \in \mathcal{L}_{GDLZ}$.
Otherwise, if $\varphi $ is of the form 
 $legal(a ^r)$ or $does(a ^r)$, where $a^r \in \mathcal{A}, r \in N$, 
we have that $legal(a ^r(\bar{z}))$, $does(a ^r(\bar{z})) \in \mathcal{L}_{GDLZ}$. By the numerical list $\bar{z}$ grammar, we know that $\bar{z}$ can be empty. Therefore, $legal(a ^r(\varepsilon))$,  $does(a ^r(\varepsilon)) \in \mathcal{L}_{GDLZ}$ or simply $legal(a ^r)$,  $does(a ^r) \in \mathcal{L}_{GDLZ}$. 
Thus, $\mathcal{L}_{GDL} \subseteq \mathcal{L}_{GDLZ}$. 
\end{proof}

\subsection{Succinctness}
\label{sec:succincteness}

Next, we compare $\mathcal{L}_{GDLZ}$ and $\mathcal{L}_{GDL}$ in order to show the succinctness of  $\mathcal{L}_{GDLZ}$ in describing the same game. 
The following definition specifies when two sets of formulas in GDLZ and GDL describe the same game.

\begin{definition}
\label{def:samegame}
Two sets of formulas $\Sigma_{GDLZ}\subseteq\mathcal{L}_{GDLZ}$ and $\Sigma_{GDL} \subseteq \mathcal{L}_{GDL}$ describe the same game 
either (i) if  $\Sigma_{GDLZ} = \{\varphi: \varphi \in \Sigma_{GDL}\}$ and  $\mathcal{L}_{GDL}$ and $\mathcal{L}_{GDL}$ have, respectively, 
$\mathcal{S}' = (N, \mathcal{A}, \Phi)$ and $\mathcal{S} = (N, \mathcal{A}, \emptyset, \Phi )$; (ii) if $\Sigma_{GDL} = \{Tr^\varphi(\varphi, \delta[j]): \varphi \in \Sigma_{GDLZ}\}$, given a GDLZ ST-model $M$, a path $\delta$ in $M$ and a stage $j$ in $\delta$ or (iii) if $\Sigma_{GDL} = \{Tr^\varphi_f(\varphi): \varphi \in \Sigma_{GDLZ} \}$, where every $\varphi \in  \Sigma_{GDLZ} $ is a bounded formula.  
\end{definition}

The following theorem show that (i) a GDLZ description has less subformulas and (ii) if we compare with the path translation, the growth is linear, if we compare with the complete translation, the growth is exponential. 

\begin{theorem}
\label{prop:Compact}
If $\Sigma_{GDLZ}$ and $\Sigma_{GDL}$ are two sets of formulas in $\mathcal{L}_{GDLZ}$ and $\mathcal{L}_{GDL}$, respec., describing the same game, then $|Sub(\Sigma_{GDLZ})| \leq |Sub(\Sigma_{GDL})|$. 
\end{theorem}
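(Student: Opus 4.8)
The plan is to split along the three scenarios of Definition~\ref{def:samegame} and, in each, exhibit a map from the subformulas of $\Sigma_{GDLZ}$ into the subformulas of $\Sigma_{GDL}$ that never identifies two distinct GDLZ subformulas; once such a map is available the cardinality inequality is immediate. In case~(i) the two sets of formulas coincide (using $\mathcal{L}_{GDL}\subseteq\mathcal{L}_{GDLZ}$ from Proposition~\ref{prop:contain}), so the identity map works and in fact $|Sub(\Sigma_{GDLZ})|=|Sub(\Sigma_{GDL})|$. Cases~(ii) and~(iii) are the substantive ones; they are handled by the same argument applied respectively to $Tr^\varphi$ (Definition~\ref{def:formulatr}) and to $Tr^\varphi_f = Tr^z_f\circ removeVar$ (Definition~\ref{def:finiteformtr}).

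The core of the argument is a structural induction on a GDLZ formula $\varphi$ establishing $|Sub(\varphi)|\le|Sub(Tr^\varphi(\varphi,\delta[j]))|$ (and likewise for $Tr^\varphi_f$). Both translations act homomorphically on $\neg$, $\land$ and $\bigcirc$, so every internal node of $\varphi$ with one of these labels yields a distinct internal node of the same kind in the translated formula, and it remains only to check the atoms. The atoms $p$, $initial$, $terminal$, $wins(r)$ and $does(a^r(\bar z))$ are each sent to a single GDL atom; $legal(a^r(\bar z))$ is sent to a GDL literal, i.e.\ to one or two subformulas; each comparison atom $z_1<z_2$, $z_1>z_2$ or $z_1=z_2$ is sent to one of the fresh GDL propositions $smaller$, $bigger$, $equal$; and $\langle\bar z\rangle$ is sent to the conjunction $\bigwedge_{i=1}^{|\bar z|}x_i(\cdot)$, which contributes $2|\bar z|-1$ subformulas. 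Hence every atomic GDLZ subformula contributes at least one GDL subformula, and assembling these local observations along the syntax tree gives the inequality for a single formula; taking the union over $\varphi\in\Sigma_{GDLZ}$ yields $|Sub(\Sigma_{GDLZ})|\le|Sub(\Sigma_{GDL})|$. The same bookkeeping also records the quantitative remark preceding the theorem: in case~(ii) a comparison atom becomes a single atom and $\langle\bar z\rangle$ a conjunction of length $|X|$, so the blow-up is linear, whereas in case~(iii) $removeVar$ replaces a formula with $k$ numerical-variable occurrences by a disjunction over $(z_{max}-z_{min}+1)^k$ valuations, so the blow-up is exponential.

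The main obstacle is that $Sub(\Sigma)$ is a \emph{set}: to obtain $|Sub(\Sigma_{GDLZ})|\le|Sub(\Sigma_{GDL})|$ one must ensure the subformula map is injective, i.e.\ that the translation never collapses two syntactically distinct GDLZ subformulas. This can fail for arithmetic reasons — distinct terms may share a value in the relevant state (e.g.\ $add(1,2)$ and $3$), so $x=add(1,2)$ and $x=3$ both translate to $equal(\pi_\mathbb{Z}^i(\delta[j]),3)$. I see two ways to close this gap: restrict attention to descriptions whose numerical terms are in normal form, so no such collapse can occur; or state the comparison at the level of occurrences (multisets) of subformulas, where the map above is occurrence-preserving and the inequality is immediate. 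I would adopt the occurrence-based reading. One should additionally observe that the shift of evaluation state across $\bigcirc$ may cause a single GDLZ subformula to translate to several distinct GDL formulas at different temporal depths, but this only enlarges $|Sub(\Sigma_{GDL})|$ and is therefore harmless to the bound.
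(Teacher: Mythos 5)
Your proof follows essentially the same route as the paper's: the same three-way case split along Definition~\ref{def:samegame}, equality in case~(i), and in cases~(ii) and~(iii) a count of how each constructor is translated, with $\langle\bar z\rangle$ (contributing a conjunction of length $|\bar z|$) and $removeVar$ (contributing the $\mu^{\eta}$-sized disjunctions) identified as the only sources of growth --- the paper's proof records exactly this via $|Sub(\Sigma_{GDL})| = |Sub(\Sigma_l)| + |\bar z|k$ and, in case~(iii), the additional $2\mu^{\eta}\kappa$ term. The one genuine difference is that you notice the injectivity problem: since $Sub$ is a set and the translation evaluates numerical terms through $v$, two syntactically distinct GDLZ subformulas (your $x = add(1,2)$ versus $x=3$, and similarly $does(a^r(x))$ versus $does(a^r(5))$ when $x$ has value $5$) can collapse to one GDL subformula, so the set-based inequality is not automatic; the paper's proof silently assumes no such collapse when it asserts $|Sub(\varphi)| = |Sub(\varphi')|$ for the non-$\langle\bar z\rangle$ cases. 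Your repair --- counting occurrences (multisets) of subformulas, or restricting to terms in normal form --- is the natural fix, but be aware that it proves a slightly amended statement rather than the set-cardinality claim as literally written; under the literal reading your own example shows the step the paper takes for granted actually needs this extra hypothesis. Your remark that the stage shift under $\bigcirc$ can only enlarge the GDL side is correct and harmless, as you say.
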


\begin{proof}
Assume the GDL and GDLZ game signatures  $\mathcal{S}' = (N, \mathcal{A}', \Phi')$ and $\mathcal{S} = (N, \mathcal{A}, X, \Phi )$, respectively. 
Since $\Sigma_{GDLZ}$ and $\Sigma_{GDL}$ describe the same game, by Definition \ref{def:samegame}, we have either: (i) $\Sigma_{GDLZ} = \{\varphi: \varphi \in \Sigma_{GDL}\}$,
$\mathcal{S}' = (N, \mathcal{A}, \Phi)$ and $\mathcal{S} = (N, \mathcal{A}, \emptyset, \Phi )$;  (ii) $\Sigma_{GDL} = \{Tr^\varphi(\varphi, \delta[j]): \varphi \in \Sigma_{GDLZ}\}$, for a GDLZ ST-model $M$, a path $\delta$ in $M$ and a stage $j$ in $\delta$, or (iii) if $\Sigma_{GDL} = \{Tr^\varphi_f(\varphi): \varphi \in \Sigma_{GDLZ} \}$, where every $\varphi \in  \Sigma_{GDLZ} $ is a bounded formula.  
In the first case, $\mathcal{A}' = \mathcal{A}$, $\Phi' = \Phi$, $X = \emptyset$ and
$\Sigma_{GDLZ} = \{\varphi: \varphi \in \Sigma_{GDL}\}$, we clearly have $|\Sigma_{GDLZ}| = |\Sigma_{GDL}|$ and $|Sub(\Sigma_{GDLZ})| = |Sub(\Sigma_{GDL})|$.

Given a path $\delta$ in a GDLZ ST-model $M$ and a stage $j$, let us now consider the case (ii) where $\Sigma_{GDL} = \{Tr^\varphi(\varphi, \delta[j]): \varphi \in \Sigma_{GDLZ}\}$.  
From $Tr^\varphi(\varphi, \delta[j])$, we have that any translation assigns $\varphi$ to a corresponding $\varphi'$ 
where $|Sub(\varphi)| = |Sub(\varphi')|$, except in the case where $\varphi$ is of the form $\langle \bar{z} \rangle$.
If $\varphi$ is of the form $\langle\bar{z}\rangle $, 
then $\varphi'$ will be constructed as $\bigwedge_{i = 1}^{|\bar{z}|} x_i(v(q_i, w))$, 
where $x_i \in X$ and $q_i$ is the $i$-th value of $\bar{z}$. Thus, $|Sub(\varphi')| = |\bar{z}||Sub(\varphi)|$. Since $|Sub(\varphi)| = 1$, then  $|Sub(\varphi')| = |\bar{z}|$.

Denote $\Sigma_l = \Sigma_{GDLZ} - \{\langle \bar{z}\rangle : \langle \bar{z}\rangle \in \mathcal{L}_{GDLZ}\}$, i.e. $\Sigma_l$ is the subset of $\Sigma_{GDLZ}$ without any formula $\langle \bar{z} \rangle$. Thereby $|Sub(\Sigma_l)| = |Sub(\{ Tr^\varphi(\varphi, \delta[j]): \varphi \in \Sigma_l)\}|)$.
Assuming $k$ as  the amount of formulas in the form $\langle \bar{z} \rangle \in \Sigma_{GDLZ}$, we have 
$|Sub(\Sigma_{GDL})| = |Sub(\Sigma_l)| + |\bar{z}|k  $.
Thereby, in the second case,
we have $|Sub(\Sigma_{GDLZ})| \leq |Sub(\Sigma_{GDL})|$.

Let us consider case (iii), where $\Sigma_{GDL} = \{Tr^\varphi_f(\varphi): \varphi \in \Sigma_{GDLZ} \}$ and every $\varphi \in  \Sigma_{GDLZ} $ is a bounded formula. Let $\mu = z_{max}-z_{min}$. The proof for case (iii) proceeds in the same way that for case (ii), except in the situation where there are numerical variables occurring in any $\varphi \in  \Sigma_{GDLZ} $. 
If we have at least one numerical variable occurring in $\varphi$, we know that $\varphi$ is either in the form $legal(a^r(\bar{z}))$, $does(a^r(\bar{z}))$, $\langle \bar{z}\rangle$, $z_1<z_2$, $z_1>z_2$ or $z_1 = z_2$. Thereby, $|\varphi| = 1$ and
$|Tr^\varphi_f(removeVar(\varphi))| = 2\mu^{\eta}\times|\varphi|   $, where $\eta$ 
is the amount of numerical variables occurring in $\varphi$.
Thereby, 
$|\varphi| <  |Tr^\varphi_f(removeVar(\varphi))|$ 
and
$|Sub(\Sigma_{GDLZ})| \leq |Sub(\Sigma_{GDL})|$.

Denote 
$\Sigma_l' = \Sigma_{GDLZ} - \{\langle \bar{z}\rangle : \langle \bar{z}\rangle \in \mathcal{L}_{GDLZ}\} - \{\varphi \in \mathcal{L}_{GDLZ}: $ there is at least one numerical variable in $\varphi\}$.  
Assuming $k$ as the amount of formulas in the form $\langle \bar{z} \rangle \in \Sigma_{GDLZ}$ and $\kappa$ as the amount of formulas where occurs $\eta$ numerical variables, we have 
$|Sub(\Sigma_{GDL})| = |Sub(\Sigma_l')| +2\mu^{\eta}\kappa +  |\bar{z}|k $.
\end{proof}

\begin{theorem}
\label{prop:linear}
Given $\Sigma_{GDLZ} \subseteq \mathcal{L}_{GDLZ}$, a GDLZ ST-model $M$ with the game signature $\mathcal{S} = (N, \mathcal{A}, \Phi, X)$: 

\begin{enumerate}

\item  If $\Sigma_{GDL} = \{Tr^\varphi(\varphi, \delta[j]): \varphi \in \Sigma_{GDLZ}\}$, given a path $\delta$ in $M$ and a stage $j$ in $\delta$,  then $|Sub(\Sigma_{GDL})|$ grows in the order $\mathcal{O}(n)$, where
$n = |Sub(\Sigma_{l})| +|X|k$,  the value $k$ represents the amount of formulas in the form $\langle \bar{z} \rangle $ in $ \Sigma_{GDLZ}$ 
and  $\Sigma_l = \Sigma_{GDLZ} - \{\langle \bar{z}\rangle : \langle \bar{z}\rangle \in \mathcal{L}_{GDLZ}\}$, i.e. $\Sigma_l$ is the subset of $\Sigma_{GDLZ}$ without any formula $\langle \bar{z} \rangle$;

\item If $\Sigma_{GDL} = \{Tr^\varphi_f(\varphi): \varphi \in \Sigma_{GDLZ} \}$, where every $\varphi \in  \Sigma_{GDLZ} $ is a bounded formula, then  $|Sub(\Sigma_{GDL})|$ grows in the order $\mathcal{O}(n + \kappa\mu^\eta)$, where $n = |Sub(\Sigma_{l})| +|X|k$,  the value $k$ represents the amount of formulas in the form $\langle \bar{z} \rangle $ in $ \Sigma_{GDLZ}$,  $\Sigma_l = \Sigma_{GDLZ} - \{\langle \bar{z}\rangle : \langle \bar{z}\rangle \in \mathcal{L}_{GDLZ}\}$ and $\eta$ is the amount of numerical variables occurring in $\kappa$ variables. 
\end{enumerate}

\end{theorem}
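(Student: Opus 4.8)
The plan is to obtain both items as immediate corollaries of the subformula count already carried out in the proof of Theorem~\ref{prop:Compact}, re-reading the exact equalities established there as $\mathcal{O}(\cdot)$ bounds. First I would recall the two structural facts used there. For the path translation, $Tr^\varphi(\cdot,\delta[j])$ sends every atomic or compound GDLZ formula to a GDL formula with the same number of subformulas, the sole exception being a formula of the shape $\langle\bar z\rangle$, which is translated to $\bigwedge_{i=1}^{|\bar z|}x_i(v(q_i,\delta[j]))$ and therefore contributes $|\bar z|$ subformulas instead of one. For the bounded translation $Tr^\varphi_f = Tr^z_f(removeVar(\cdot))$ there is the additional blow-up produced by $removeVar$: an atomic formula $\varphi$ in which $\eta$ numerical variables occur is turned into a disjunction over all value assignments in $\{z_{min},\dots,z_{max}\}$ (each paired with the corresponding $x(q)$ atoms), yielding $|Sub(Tr^\varphi_f(\varphi))| = 2\mu^{\eta}\cdot|Sub(\varphi)|$ with $\mu = z_{max}-z_{min}$.

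For item~1 I would fix a path $\delta$ in $M$ and a stage $j$, split $\Sigma_{GDLZ}$ into $\Sigma_l$ together with the $k$ formulas of the form $\langle\bar z\rangle$, and use the first fact to get $|Sub(\{Tr^\varphi(\varphi,\delta[j]):\varphi\in\Sigma_l\})| = |Sub(\Sigma_l)|$ (exactly as observed in the proof of Theorem~\ref{prop:Compact}), while each $\langle\bar z\rangle$ contributes at most $|\bar z|\le|X|$ subformulas after translation. Summing gives $|Sub(\Sigma_{GDL})|\le|Sub(\Sigma_l)|+|X|k = n$, hence $|Sub(\Sigma_{GDL})|$ grows linearly, i.e.\ in the order $\mathcal{O}(n)$.

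For item~2, since every member of $\Sigma_{GDLZ}$ is a bounded formula, $removeVar$ terminates and $Tr^\varphi_f$ is everywhere defined. I would partition $\Sigma_{GDLZ}$ into the $k$ formulas of the form $\langle\bar z\rangle$, the $\kappa$ formulas in which $\eta$ numerical variables occur, and the remaining set $\Sigma_l'\subseteq\Sigma_l$; applying the two facts clause by clause reproduces the equality
\[
|Sub(\Sigma_{GDL})| \;=\; |Sub(\Sigma_l')| + 2\mu^{\eta}\kappa + |\bar z|k
\]
reached at the end of the proof of Theorem~\ref{prop:Compact}. Bounding $|Sub(\Sigma_l')|\le|Sub(\Sigma_l)|$ and $|\bar z|k\le|X|k$ then yields $|Sub(\Sigma_{GDL})|\le n+2\kappa\mu^{\eta}$, which is $\mathcal{O}(n+\kappa\mu^{\eta})$.

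I do not expect a genuine obstacle here, since the statement is essentially a repackaging of Theorem~\ref{prop:Compact} in asymptotic form. The only point requiring care is the bookkeeping of subformula counts over \emph{sets} of formulas: distinct members of $\Sigma_{GDLZ}$ may share subformulas, the buckets for $\langle\bar z\rangle$-formulas and for formulas carrying numerical variables may overlap in the bounded case, and a couple of translation clauses (e.g.\ the $\neg legal(\cdots)$ branch for $legal$) insert a bounded number of extra connectives. None of these affects the result, since they only perturb constant factors hidden by $\mathcal{O}(\cdot)$ and the stated bounds are upper bounds; the substantive content is simply that the per-formula expansion is at most a factor $|X|$ for each $\langle\bar z\rangle$ and at most a factor $2\mu^{\eta}$ for each formula carrying $\eta$ numerical variables, which is exactly what the $\mathcal{O}(n)$ and $\mathcal{O}(n+\kappa\mu^{\eta})$ growth rates record.
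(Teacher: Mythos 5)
Your proposal is correct and takes essentially the same route as the paper, whose proof of this theorem consists precisely of the remark that it follows directly from the proof of Theorem~\ref{prop:Compact}: you reproduce the counts established there, $|Sub(\Sigma_{GDL})| = |Sub(\Sigma_l)| + |\bar z|k$ for the path translation and $|Sub(\Sigma_{GDL})| = |Sub(\Sigma_l')| + 2\mu^{\eta}\kappa + |\bar z|k$ for the bounded one, and read them asymptotically as $\mathcal{O}(n)$ and $\mathcal{O}(n+\kappa\mu^{\eta})$. Your extra bookkeeping (bounding $|\bar z|\leq|X|$, shared subformulas, constant-factor perturbations) only makes explicit what the paper leaves implicit.
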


\begin{proof}
Theorem \ref{prop:linear} follows directly from the proof of Theorem \ref{prop:Compact}.
\end{proof}

The partial translation $Tr^\varphi$ only concerns a fragment of the GDLZ model, that is the part of the model involved in a specific path. The size of a formula translated through $Tr^\varphi$ has a linear growth over the number of numerical variables in $X$ and the number of formulas in the form $\langle \bar{z} \rangle$. Conversely, $Tr^\varphi_f$ is a complete translation over finite GDLZ models. To represent a GDLZ formula in a GDL formula regardless of a specific path, we should remove the occurrence of numerical variables as numerical terms (see Algorithm \ref{alg:remove}). This procedure exponentially increases the size of the translated formula, depending mainly on the occurrence of numerical variables in the original GDLZ formula.
 
\section{Conclusion}
\label{sec:conclusions}

In this paper, we have introduced a GDL extension to describe games with numerical aspects, called GDLZ. 
In GDLZ, states are evaluated with propositions and an assignment of integer values to numerical variables.
This allows us to define the terminal and goal states in terms of the numerical conditions. 
Furthermore, we define actions with numerical parameters, such that these parameters can influence over the action legality and over the state update. 
The language was extended mainly to include the representation of numerical variables and integer values as well as to allow numerical comparison.



We defined translations between GDLZ and GDL game models and formulas. 
Since GDL models have finite components, we can not define a complete model translation for any GDLZ model.
We first defined a partial translation from any GDLZ model restricted to a specified path, i.e. only a run in the game is represented. Second, we  defined a complete translation from GDLZ models with finite components and bounded formulas. 
We show that, in both cases, a translated GDLZ model, path or formula is a GDL model, path or formula, respectively. Furthermore, we prove that if a formula is satisfied at a stage in a path under a GDLZ model, its translation will also be satisfied at the same stage in the translated path under the translated model. 

Finally, we show that, if we have a GDLZ and a GDL description for the same (finite) game, the GDLZ description is more succinct or equal, in terms of the quantity of subformulas in the description. More precisely,
if the GDL game description is based on the partial translation from a GDLZ description restricted to one path, it is linearly larger then the GDLZ description. When we consider  the complete model translation, 
the GDL description is exponentially larger than the GDLZ description. 

Future work may extend GDLZ to define numerical rewards to players, stating their achievement when the game ends. It means that numerical variables may not have values assigned in some state of the model. Our aim is to investigate this new kind of numerical models.
In our framework, it is possible to define both concurrent and sequential games. However, the legality of an agent's action is independent from the actions of other agents. Thereby, it may be inappropriate to describe concurrent games where the actions of two agents change the same numerical variable. To overcome this limitation, future work may explore the definition of the legality function over joint actions.


%

\subsubsection{Acknowledgments.} 
Munyque Mitttelmann and Laurent Perrussel acknowledge the support of the ANR project AGAPE ANR-18-CE23-0013.

\label{sec:bib}

\bibliographystyle{splncs04}

\end{document}